  \let\oldparagraph\paragraph
  \renewcommand{\paragraph}{
    \@ifstar
      \xxxParagraphStar
      \xxxParagraphNoStar
  }
  \newcommand{\xxxParagraphStar}[1]{\oldparagraph*{#1}\mbox{}}
  \newcommand{\xxxParagraphNoStar}[1]{\oldparagraph{#1}\mbox{}}
  \let\oldsubparagraph\subparagraph
  \renewcommand{\subparagraph}{
    \@ifstar
      \xxxSubParagraphStar
      \xxxSubParagraphNoStar
  }
  \newcommand{\xxxSubParagraphStar}[1]{\oldsubparagraph*{#1}\mbox{}}
  \newcommand{\xxxSubParagraphNoStar}[1]{\oldsubparagraph{#1}\mbox{}}
\patchcmd\longtable{\par}{\if@noskipsec\mbox{}\fi\par}{}{}
\def\maxwidth{\ifdim\Gin@nat@width>\linewidth\linewidth\else\Gin@nat@width\fi}
\def\maxheight{\ifdim\Gin@nat@height>\textheight\textheight\else\Gin@nat@height\fi}
\def\fps@figure{htbp}
  \renewcommand*\contentsname{Table of contents}
  \newcommand\contentsname{Table of contents}
  \renewcommand*\listfigurename{List of Figures}
  \newcommand\listfigurename{List of Figures}
  \renewcommand*\listtablename{List of Tables}
  \newcommand\listtablename{List of Tables}
  \renewcommand*\figurename{Figure}
  \newcommand\figurename{Figure}
  \renewcommand*\tablename{Table}
  \newcommand\tablename{Table}
\newcommand{\anon}{1}
\def\qed{\rule{2mm}{2mm}}
\def\independent{\perp \!\!\! \perp}
\newtheorem{theorem}{Theorem}[section]
\newtheorem{lemma}{Lemma}[section]
\newtheorem{definition}{Definition}[section]
\newtheorem{proposition}{Proposition}[section]
\theoremstyle{definition}
\newtheorem{remark}{Remark}[section]
\newtheorem{assumption}{Assumption}[section]
\newcommand{\mycomment}[1]{}
\renewcommand{\baselinestretch}{1.35}
\newcommand\Var{{\rm Var}}
\newcommand\cD{\mathcal{D}}
\newcommand\cP{\mathcal{P}}
\newcommand\bA{\mathbf{A}}
\newcommand\Cov{\mbox{\normalfont Cov}}
\newcommand\cG{\mathcal{G}}
\newcommand\cN{\mathcal{N}}
\newcommand\mE{\mathrm{E}}
\newcommand\mP{\mathrm{P}}
\newcommand\DE{\mathrm{DE}}
\newcommand\IE{\mathrm{IE}}
\begin{document}

\def\spacingset#1{\renewcommand{\baselinestretch}%
{#1}\small\normalsize} \spacingset{1}


\if1\anon
{
  \title{\bf Auto-Doubly Robust Estimation of Causal Effects on a Network}
  \author{Jizhou Liu \thanks{Peking University, HSBC Business School,  \url{jizhou.liu@phbs.pku.edu.cn}}
	\and Dake Zhang \thanks{Shanghai Jiao Tong University, Antai College of Economics and Management, \url{dk.zhang@sjtu.edu.cn}}
	\and Eric J. Tchetgen Tchetgen \thanks{University of Pennsylvania, The Wharton School, \url{ett@wharton.upenn.edu}}}
  \maketitle
} \fi

\if0\anon
{
  \bigskip
  \bigskip
  \bigskip
  \begin{center}
    {\LARGE\bf Title}
\end{center}
  \medskip
} \fi

\bigskip
\begin{abstract}
This paper develops new methods for causal inference in observational studies on a single large network of interconnected units, addressing two key challenges: long-range dependence among units and the presence of general interference. We introduce a novel network version of Augmented Inverse Propensity Weighted, which combines propensity score and outcome models defined on the network to achieve doubly robust identification and estimation of both direct and spillover causal effects. Under a network version of conditional ignorability, the proposed approach identifies the expected potential outcome for a unit given the treatment assignment vector for its network neighborhood up to a user-specified distance, while marginalizing over treatment assignments for the rest of the network.  Under the union of two  Markov assumptions—one governing the propensity score model and the other the outcome model—we propose a corresponding semiparametric estimator based on general parametric specifications of nuisance functions. In particular, we suggest a class of parametric auto-regression models motivated by the Markov assumptions \citep{Besag1974}. By combining a restricted interference assumption with the propensity score model, we establish a new doubly robust identification result for the expected potential outcome under a hypothetical intervention on the treatment assignment vector for the entire network. We formally prove that, under weak network dependence, our proposed estimators are asymptotically normal and we characterize the impact of model misspecification on the asymptotic variance. Extensive simulation studies highlight the practical relevance of our approach. We further demonstrate its application in an empirical analysis of the NNAHRAY study, evaluating the impact of incarceration on individual socioeconomic outcomes in Brooklyn, New York.

\end{abstract}

\noindent%
{\it Keywords: Network, Interference, Augmented Inverse Propensity Weighting, Chain Graphs.}
\vfill

\newpage
\spacingset{1.8} 

\section{Introduction}\label{sec:intro}
Conducting causal inference on a single large network presents two fundamental challenges: (i) the data are not independently and identically distributed (non-i.i.d.), and (ii) interference is pervasive. The first challenge arises naturally in any network setting, where observations are linked through a known network structure (e.g., friendships, collaborations, or information-sharing pathways). This connectivity induces dependencies among neighboring units, thereby violating the assumption of independence. Additionally, the distribution of observables may vary across units due to differences in their network positions or roles (e.g., hubs versus peripheral nodes), leading to violations of the identical distribution assumption. The second challenge, interference, has received considerable attention in the causal inference literature. Formally, interference represents a departure from the stable unit treatment value assumption, or SUTVA \citep{cox1958planning, rubnin1972}, as it allows the outcome of one unit to be influenced by the treatment status of other units. This phenomenon gives rise to spillover effects, which are central to understanding causal relationships in many empirical applications \citep{sobel2006randomized, Rosenbaum01032007, hudgens2008, Hong2006, graham2008identifying, manski2013identification, TchetgenTchetgen2012}.

In this paper, we propose an identification and estimation strategy to address the two challenges mentioned above primarily in the context of observational studies, although our results may also be of interest in network experimental settings. Specifically, we develop a novel network version of the Augmented Inverse Propensity Weighted, which we refer to as Network AIPW (NAIPW). This approach combines the strengths of both a network version of a propensity model and a network version of an outcome model to achieve identification and estimation. We begin by establishing the causal parameter identified by our NAIPW moment equation under a standard network conditional ignorability condition. This causal parameter, denoted by $\mE[Y_i(\mathbf{a}_{\cN_i}, \mathbf{A}_{-\cN_i})]$, can be interpreted as the expected potential outcome for unit $i$ given a fixed local treatment assignment vector $\mathbf{a}_{\cN_i}$ for its network neighborhood up to a certain distance, while averaging over $\mathbf{A}_{-\cN_i}$, the treatment assignments of all other units in the network. Crucially, we note that the causal estimand in question does not a priori impose a restriction on the degree of interference; specifically, the hypothetical intervention on the unit's user-specified neighborhood $\cN_i$ does not preclude interference effects beyond the specified neighborhood and in fact remains agnostic to the prospect of such long-range spillover effects. In other words, to use the language introduced by \cite{Aronow2017}, we do not a priori assume a well-specified treatment exposure mapping. Notably, NAIPW is shown to satisfy a doubly robust identification property in that the identification result holds if either a network model for the outcome process is well-specified, or a network model for the treatment allocation mechanism is correctly specified, but not necessarily both are correct, without apriori knowledge of which model is correctly specified.

Next, to make statistical estimation and inference more tractable, we introduce two sets of Markov assumptions; one pertaining to the outcome regression model and the other to the propensity score model. These assumptions, motivated a corresponding so-called \emph{Chain Graph} models—a widely used class of graphical statistical models proposed by \cite{auto-g, shpitser2024}—assert that a unit’s observables are conditionally independent of the rest of the network given the information of its neighbors. Under the Markov assumption, we reduce the dimensionality of the outcome regression and propensity score models, yielding tractable network-based auto-regression models which we refer to as \emph{Auto-models}. The resulting auto-regression network AIPW moment function is correspondingly referred to as the \emph{Auto AIPW (AAIPW) estimating function}.

Subsequently, by further imposing a restricted interference assumption that limits interference to local units, we establish a stronger double robustness identification result. Specifically, we show that under either the restricted interference assumption or the Markov property on the outcome model, our NAIPW estimating function identifies the unit-level expected potential outcome, $\mE[Y_i(\mathbf{a})]$, where $\mathbf{a}$ denotes a fixed treatment assignment vector for all units in the network. The restricted interference assumption imposes a causal restriction that limits the influence of other units' treatments to a local neighborhood around each unit—effectively assuming a known and well-specified exposure mapping \citep{Aronow2017}. In contrast, the Markov assumption on the outcome model is a statistical restriction on the dependence structure of the outcome process. Furthermore, we demonstrate that our AAIPW function identifies $\mE[Y_i(\mathbf{a})]$ under either the outcome model's Markov property alone or the combination of restricted interference and a Markov assumption on the propensity score model.

Finally, we propose a doubly robust estimation strategy that relies on parametric models for the nuisance parameters in our AAIPW function. As a recommendation for empirical practice, we suggest using a class of parametric auto-regression models based on the Chain Graph framework \citep{Besag1974}. This parametric framework provides a coherent model for the joint distribution of the observed data over the network and are consistent with the Markov assumptions introduced earlier. Leveraging the large-network asymptotic theory recently developed by \cite{kojevnikov2021limit}, we establish the asymptotic normality of our network average AAIPW-based estimators under a large network asymptotic regime in which the network is not excessively dense, preventing any major hub of network dependence from dominating. We also characterize the impact of model misspecification on the asymptotic distribution of our estimator and obtain a variance estimator that is consistent irrespective of whether the treatment allocation mechanism or the network outcome regression model is well-specified. We evaluate the empirical performance of our proposed methods through extensive simulation studies, as well as an empirical application to the Networks, Norms and HIV/STI Risk Among Youth (NNAHRAY) study \citep{Friedman17} to evaluate the impact of incarceration on the employment and economic outcomes of residents in an impoverished neighborhood of Brooklyn, New York.

\textbf{Related Literature.} There is a growing literature on conducting causal inference with network data. Early work on interference considered settings where the population is partitioned into non-overlapping blocks, allowing for arbitrary interference and dependence within blocks but prohibiting interference and dependence across blocks. This framework is commonly referred to as \emph{partial interference} \citep{sobel2006randomized,Hong2006,hudgens2008,TchetgenTchetgen2012, Liu2014, Lundin2014, Ferracci2014, qu2024}. More recent literature has sought to relax this assumption by allowing for more general patterns of interference that do not adhere to a block structure. Instead, these models typically restrict a unit’s set of interfering units to a small set defined by spatial proximity or network connections, while also limiting the degree of outcome dependence to facilitate inference \citep{VerbitskySavitz2012,Liu2016,Aronow2017,Sofrygin2017}. A separate strand of work has primarily focused on detection of specific forms of spillover effects in the context of an experimental design in which the intervention assignment process is known to the analyst \citep{Aronow2012,vanderweele2012,Bowers2013,athey2018exact,puelz2022graph,Leung2022,gao2023}. Some recent studies in experimental settings allow for either unknown, or incompletely specified, or completely misspecified interference structures and show that a well-defined causal estimand may still be recovered under certain conditions \citep{savje2021, savje2023, weinstein2024}; potentially leveraging proxy measurements of the true interference structure \citep{weinstein2025}. In contrast, our paper considers potential interference induced by a \emph{known} network structure; for which as we establish, stronger identification and inferential results become possible.

Our paper is closely related to the recent work by \cite{auto-g}, which specifies a Chain Graph model to capture long-range dependence and accommodate general interference.\footnote{Partly inspired by \cite{auto-g}, there has been a fast-growing literature investigating causal chain graphs; see, for example, \cite{Sherman2018, Bhattacharya2020, Ogburn2020, bhattacharya2024causal, shpitser2024}.} Our estimation strategy extends their auto-outcome regression-based approach by incorporating a propensity score auto-regression model, to produce a doubly robust moment equation. Beyond the benefit of double robustness, the inclusion of a propensity model in our method facilitates the reconciliation with experimental settings where treatment allocation is known by design. In contrast to the coding estimator in \cite{auto-g}, which fails to leverage any available knowledge about the treatment allocation mechanism (as in experimental settings), and may be inefficient by discarding a substantial portion of the sample, our approach uses all available data in the network—both for the estimation of nuisance parameters and for causal effect estimation—based on the $\psi$-dependent network framework of \cite{kojevnikov2021limit}; while appropriately accounting for uncertainty in both stages.

Our paper builds on a large body of work on the use of propensity score modeling to estimate treatment effects in the presence of interference, including, but not limited to, \cite{TchetgenTchetgen2012, perez2014assessing, Liu2016, barkley2020causal, forastiere2021identification}. Our work is also closely related to recent studies that develop AIPW estimators under partial interference, such as \cite{liu2019doubly, park2020efficient}. A recent contribution by \cite{Ogburn02012024} proposes a doubly robust estimator under interference based on a causal structural equation model, assuming both a correctly specified exposure mapping and the absence of contagion. In contrast, we develop AAIPW estimators that accommodate general network interference and explicitly allow for contagion.

The remainder of this paper is organized as follows. Section~\ref{sec:setup} describes the setup and notation and presents the first identification result under a standard notion of network ignorability. Section~\ref{sec:markov} introduces the Markov property assumptions and compares them with existing approaches in the interference literature. Section~\ref{sec:main-result} presents the main identification results based on the Markov assumptions, and the proposed estimation strategies, along with asymptotic results under weak network dependence; and an extension of our framework to accommodate a user-defined exposure mapping. Section~\ref{sec:simulation} describes the results from a simulation study evaluating the performance of the proposed approach. Section~\ref{sec:empirical} illustrates the proposed methods in an empirical application based on the NNAHRAY study in \cite{Friedman17}. Finally, Section~\ref{sec:conclusion} offers some concluding remarks and directions for future research.

\section{Setup}\label{sec:setup}
Suppose we observe data on a population of $n$ interconnected units. For each unit $i \in [n]:=\{1,\dots,n\}$, we observe $(Y_i, A_i, L_i)$, where $Y_i \in \mathbb{R}$ denotes the observed outcome of interest, $A_i \in \{0, 1\}$ denotes the binary treatment received, and $L_i \in \mathbb{R}^p$ denotes a vector of observed pretreatment covariates. Let $\mathbf{A} \in \{0,1\}^n$ and $\mathbf{Y} \in \mathbb{R}^n$ denote the treatment and outcome vectors for all $n$ units, and let $\mathbf{L} \in \mathbb{R}^{n \times p}$ denote the corresponding covariate matrix. We define $\mathscr{A}(n)$ as the set of all possible treatment assignment vectors of length $n$; for example, $\mathscr{A}(2) = \{(0,0), (0,1), (1,0), (1,1)\}$. As is standard in causal inference, we assume the existence of potential outcomes $\{Y_i(\mathbf{a}) : \mathbf{a} \in \mathscr{A}(n)\}$ for each unit $i$, and define $\mathbf{Y}(\mathbf{a}) = (Y_1(\mathbf{a}), \dots, Y_n(\mathbf{a}))'$ as the vector of potential outcomes under treatment allocation $\mathbf{a} = (a_1, \dots, a_n) \in \{0,1\}^n$. The observed and potential outcomes are linked through the network version of the consistency assumption:
\begin{assumption}[Network Outcome Consistency]\label{ass:consistency}
\begin{equation*}
    \mathbf{Y} = \mathbf{Y}(\mathbf{A}) \quad \text{a.s.}
\end{equation*}
\end{assumption}
For identification, we impose a network analog of the conditional ignorability assumption:
\begin{assumption}[Network Conditional Ignorability]\label{ass:ignorability}
\begin{equation*}
    \mathbf{A} \independent \mathbf{Y}(\mathbf{a}) \mid \mathbf{L} \quad \text{for all } \mathbf{a} \in \mathscr{A}(n)~.
\end{equation*}
\end{assumption}
This assumption, also known as conditional exchangeability or unconfoundedness, states that all relevant information used to generate the treatment assignment—whether by a researcher in an experiment or by nature in an observational setting—is captured by the covariates $\mathbf{L}$. In contrast, in the i.i.d. setting, the commonly imposed ignorability assumption is marginal and applies at the unit level: $A_i \independent Y_i(a_i) \mid L_i$.

Finally, we make the following positivity assumption at the network treatment allocation level: 
\begin{assumption}[Network Positivity]\label{ass:positivity}
    \begin{equation*}
        f(\mathbf{a}\mid \mathbf{L}) > \sigma > 0 \text{ a.e. for all } \mathbf{a} \in \mathscr{A}(n)~.
    \end{equation*}
\end{assumption}
We consider unit-level potential outcome expectations $\mE\left[ Y_i(\mathbf{a}) \right]$ for $i \in [n] $ as one of our primary identification targets. These expectations serve as fundamental building blocks for a wide array of causal estimands commonly studied in the literature. Following the framework of \citet{hudgens2008} and \citet{auto-g}, we define the direct and indirect (or spillover) effects as follows:
\begin{align*}
    \DE_i(\mathbf{a}_{-i}) &= \mE\left[ Y_i(\mathbf{a}_{-i}, a_i = 1) \right] - \mE\left[ Y_i(\mathbf{a}_{-i}, a_i = 0) \right]~, \\
    \IE_i(\mathbf{a}_{-i}) &= \mE\left[ Y_i(\mathbf{a}_{-i}, a_i = 0) \right] - \mE\left[ Y_i(\mathbf{a}_{-i} = \mathbf{0}, a_i = 0) \right]~,
\end{align*}
where $\mathbf{a}_{-i}$ denotes the treatment assignments for all units other than unit~$i$. These effects can be further averaged over a hypothetical allocation regime $\pi_i(\mathbf{a}_{-i}; \alpha)$ indexed by $\alpha$ to obtain allocation-specific unit-level average direct and spillover effects: $\DE_i(\alpha) = \sum_{\mathbf{a}_{-i} \in \mathscr{A}(n-1)} \DE_i(\mathbf{a}_{-i}) \, \pi_i(\mathbf{a}_{-i}; \alpha)$ and $\IE_i(\alpha) = \sum_{\mathbf{a}_{-i} \in \mathscr{A}(n-1)} \IE_i(\mathbf{a}_{-i}) \, \pi_i(\mathbf{a}_{-i}; \alpha)$. Finally, averaging over all units in the network yields the allocation-specific population average direct and spillover effects: $\DE(\alpha) = n^{-1} \sum_{i=1}^n \DE_i(\alpha)$ and $\IE(\alpha) = n^{-1} \sum_{i=1}^n \IE_i(\alpha)$. Another parameter of interest in our paper is $\mE[Y_i(\mathbf{a}_{\cN_i}, \mathbf{A}_{-\cN_i})]$, for which direct and indirect effects can be defined analogously.  As we will show later, identification of $\mE[Y_i(\mathbf{a})]$ requires stronger assumptions than $\mE[Y_i(\mathbf{a}_{\cN_i}, \mathbf{A}_{-\cN_i})]$.

We now introduce additional notation used in our identification strategy for general network data under interference. Let $\mathcal E_n$ denote the set of edges in the network where each element $(i, j) \in \mathcal{E}_n$ indicates that units $i$ and $j$ are directly connected. The set $\mathcal{E}_n$ thus characterizes the network by specifying which pairs of units are neighbors and implies a distance function, denoted by $d_n(i, j)$, which is the distance between units $i$ and $j$ on the network.\footnote{The distance between two vertices in a graph is the number of edges in a shortest path (also called a graph geodesic) connecting them.} Let $\cN_n^\partial(i; s) = \{j \in [n] : d_n(i, j) = s\}$ denote the $s$-step neighborhood boundary of unit $i$ and $\cN_n(i; s) = \{j \in [n] : d_n(i, j) \leq s\}$ denote the $s$-step neighborhood of $i$, which includes all units within $s$ steps of $i$, including $i$ itself. Let $\mathbf{A}_{\cN_n(i, s)}$ denote the treatment assignment vector for unit $i$ and all units within its $s$-step neighborhood, and $\mathbf{A}_{-\cN_n(i, s)}$ denote the assignment vector for all remaining units outside this neighborhood. For notation simplicity, we write $\mathbf{A}_{\cN_i} = \mathbf{A}_{\cN_n(i, K)}$ and $\mathbf{A}_{-\cN_i} = \mathbf{A}_{-\cN_n(i, K)}$ for a fixed distance $K \ge 1$, when there is no risk of ambiguity. Finally, let $\mathbf{Y}_{-i}$ denote the vector of observed outcomes for all units in the network except unit~$i$.

At this stage, we impose the network versions of the three standard causal identification assumptions: consistency, positivity, and conditional ignorability. The following theorem presents our causal identification result in the form of a moment condition under these assumptions, which we refer to as Network AIPW (NAIPW).

\begin{theorem}\label{thm:identification1}
    Under Assumption \ref{ass:consistency}-\ref{ass:positivity}, for any fixed distance $K\ge 1$ we have:
    \begin{equation}\label{eqn:identification1}
    \begin{aligned}
        &\mE\left[  \frac{1\left\{ \mathbf{A}_{\cN_i} = \mathbf{a}_{\cN_i} \right\} \left(Y_i - \mE[Y_i \mid \mathbf{Y}_{-i}, \mathbf{A}_{\cN_i}=\mathbf{a}_{\cN_i}, \mathbf{A}_{-\cN_i}, \mathbf{L}]\right)}{\mP\left( \mathbf{A}_{\cN_i} = \mathbf{a}_{\cN_i} \mid \mathbf{L}, \mathbf{A}_{-\cN_i} \right)} + \mE[Y_i \mid \mathbf{Y}_{-i}, \mathbf{A}_{\cN_i}=\mathbf{a}_{\cN_i}, \mathbf{A}_{-\cN_i}, \mathbf{L}] \right] \\
        &= \mE\left[ Y_i(\mathbf{a}_{\cN_i}, \mathbf{A}_{-\cN_i}) \right]~,
    \end{aligned}
    \end{equation}
    if either $\mP\left( \mathbf{A}_{\cN_i} = \mathbf{a}_{\cN_i} \mid \mathbf{L}, \mathbf{A}_{-\cN_i} \right)$ or $\mE[Y_i \mid \mathbf{Y}_{-i}, \mathbf{A}_{\cN_i}=\mathbf{a}_{\cN_i}, \mathbf{A}_{-\cN_i}, \mathbf{L}]$ is correctly specified.
\end{theorem}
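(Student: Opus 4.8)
The plan is to carry out the standard AIPW double-robustness bookkeeping, adapted to the network. Write $g_i:=\mathbf{1}\{\bA_{\cN_i}=\mathbf{a}_{\cN_i}\}$; let $\pi$ and $h$ denote the two nuisances appearing in the displayed moment (the conditional treatment probability $\mP(\bA_{\cN_i}=\mathbf{a}_{\cN_i}\mid\bL,\bA_{-\cN_i})$ and the conditional outcome mean $\mE[Y_i\mid\bY_{-i},\bA_{\cN_i}=\mathbf{a}_{\cN_i},\bA_{-\cN_i},\bL]$), with $\pi^\ast,h^\ast$ their correctly specified versions; abbreviate the moment by $\phi=g_i\pi^{-1}(Y_i-h)+h$ and the target by $\mu^\ast:=\mE[Y_i(\mathbf{a}_{\cN_i},\bA_{-\cN_i})]$. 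I would handle the two sufficient conditions separately, in each case establishing $\mE[\phi]=\mu^\ast$.

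\emph{Propensity correct ($\pi=\pi^\ast$).} Decompose $\mE[\phi]=\mE[g_i(\pi^\ast)^{-1}Y_i]+\mE\big[h-g_i(\pi^\ast)^{-1}h\big]$. For the first term: by Assumption~\ref{ass:consistency}, on $\{g_i=1\}$ one has $Y_i=Y_i(\mathbf{a}_{\cN_i},\bA_{-\cN_i})$; partitioning over the realized value $\mathbf{a}_{-\cN_i}$ of $\bA_{-\cN_i}$ and conditioning on the $\sigma$-field $\cG$ generated by $\bL$, $\bA_{-\cN_i}$ and the full potential-outcome process, Assumption~\ref{ass:ignorability} gives $\mE[g_i\mid\cG]=\mP(\bA_{\cN_i}=\mathbf{a}_{\cN_i}\mid\bL,\bA_{-\cN_i})=\pi^\ast$ (positive and bounded away from $0$ by Assumption~\ref{ass:positivity}), so the inverse weight cancels the conditional selection probability and the term equals $\mE\big[\sum_{\mathbf{a}_{-\cN_i}}\mathbf{1}\{\bA_{-\cN_i}=\mathbf{a}_{-\cN_i}\}Y_i(\mathbf{a}_{\cN_i},\mathbf{a}_{-\cN_i})\big]=\mu^\ast$. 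For the augmentation term, the same conditioning — now applied to $h$, which on $\{g_i=1\}$ is $\cG$-measurable by consistency — gives $\mE[g_i(\pi^\ast)^{-1}h]=\mE[h]$, so the augmentation is mean zero and $\mE[\phi]=\mu^\ast$.

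\emph{Outcome regression correct ($h=h^\ast$).} Condition on $\cF:=\sigma(\bY_{-i},\bA,\bL)$: the factors $\pi^{-1}$, $g_i$ and $h^\ast$ are all $\cF$-measurable, and on $\{g_i=1\}$ the definition of $h^\ast$ gives $\mE[Y_i\mid\cF]=h^\ast$, so $\mE[g_i\pi^{-1}(Y_i-h^\ast)\mid\cF]=0$ and hence $\mE[\phi]=\mE[h^\ast]$. It then remains to show $\mE[h^\ast]=\mu^\ast$: I would do this by iterated expectations, using Assumption~\ref{ass:consistency} to rewrite $h^\ast$ as a conditional mean of $Y_i(\mathbf{a}_{\cN_i},\bA_{-\cN_i})$ given the realized network state and $\bL$, and then marginalizing via Assumption~\ref{ass:ignorability} so that $\mE[h^\ast]$ collapses to the $g$-formula form of $\mu^\ast$.

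The step I expect to be the main obstacle is this last identity $\mE[h^\ast]=\mu^\ast$, and more generally every place where $\bY_{-i}$ enters the argument. Unlike the pre-treatment covariates $\bL$, the conditioning vector $\bY_{-i}$ is itself an outcome, hence responds to $\bA_{\cN_i}$, and so cannot be held fixed while manipulating the inverse weight or the augmentation; the correct $\sigma$-field has to be chosen so that consistency aligns the observed quantities with their intervened counterparts and ignorability — in its joint form over the potential-outcome process — detaches $\bA_{\cN_i}$ from those potential outcomes. Once the conditioning is arranged correctly, the remaining computations are routine tower-property algebra, with positivity needed only to keep all ratios well-defined almost everywhere.
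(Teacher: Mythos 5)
Your overall architecture is the same as the paper's: the paper expands the moment into the three pieces $\mE[h]$, $\mE[g_i\pi^{-1}Y_i]$ and $\mE[g_i\pi^{-1}h]$ and argues about each, which is your two-case bookkeeping rearranged. The parts of your argument that do not involve $\mathbf{Y}_{-i}$ in an essential way are sound and match the paper: the IPW term $\mE[g_i(\pi^\ast)^{-1}Y_i]=\mu^\ast$ via conditioning on $\sigma(\mathbf{L},\mathbf{A}_{-\cN_i},\text{potential outcomes})$, and the vanishing of $\mE[g_i\pi^{-1}(Y_i-h^\ast)\mid\cF]$ when the outcome regression is correct. The problem is that both of your cases terminate in precisely the step you yourself flag as the obstacle, and that step is where the entire content of the theorem lives, so as written neither case is closed.

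Concretely: in the propensity-correct case, the conditioning on $\cG$ that you describe shows that $\mE[g_i(\pi^\ast)^{-1}h]$ equals the mean of the outcome-model function evaluated at the \emph{counterfactual} neighbour outcomes $\mathbf{Y}_{-i}(\mathbf{a}_{\cN_i},\mathbf{A}_{-\cN_i})$ --- consistency replaces $\mathbf{Y}_{-i}$ by that counterfactual on $\{g_i=1\}$ before the indicator is integrated out. It does \emph{not} show that this equals $\mE[h]$, which evaluates the same function at the \emph{factual} $\mathbf{Y}_{-i}$; since $\mathbf{Y}_{-i}$ responds to $\mathbf{A}_{\cN_i}$, the two laws differ and the augmentation term is not mean zero for a generic $h$ that genuinely depends on $\mathbf{Y}_{-i}$. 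The same distributional mismatch is what your deferred identity $\mE[h^\ast]=\mu^\ast$ in the outcome-correct case must overcome, and "routine tower-property algebra" does not do it once $\mathbf{Y}_{-i}$ sits in the conditioning set. The paper's proof addresses this head on: it first proves a lemma that the conditional law of $Y_i(\mathbf{a}_{\cN_i},\mathbf{A}_{-\cN_i})$ given $(\mathbf{Y}_{-i},\mathbf{A}_{\cN_i}=\mathbf{a}_{\cN_i},\mathbf{A}_{-\cN_i},\mathbf{L})$ coincides with its law given $(\mathbf{Y}_{-i}(\mathbf{a}_{\cN_i},\mathbf{A}_{-\cN_i}),\mathbf{A}_{-\cN_i},\mathbf{L})$, using consistency to rewrite the conditioning event and the derived independence $\mathbf{Y}(\mathbf{a}_{\cN_i},\mathbf{a}_{-\cN_i})\independent\mathbf{A}_{\cN_i}\mid\mathbf{A}_{-\cN_i},\mathbf{L}$ to drop $\mathbf{A}_{\cN_i}=\mathbf{a}_{\cN_i}$; only then does it apply the tower property, to the counterfactual version, and in its double-robustness step it treats the (possibly misspecified) outcome regression as a function of $(\mathbf{A}_{-\cN_i},\mathbf{L})$ after the neighbours' outcomes have been integrated out. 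To complete your proposal you would need to reproduce this lemma and carry the substitution of $\mathbf{Y}_{-i}$ by $\mathbf{Y}_{-i}(\mathbf{a}_{\cN_i},\mathbf{A}_{-\cN_i})$ explicitly through the standalone augmentation term as well as the weighted one --- or restrict the class of outcome models --- rather than asserting the cancellation; in its current form the key equality in each case is stated but not proved.
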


Theorem~\ref{thm:identification1} establishes identification of the average potential outcome $\mE\left[ Y_i(\mathbf{a}_{\cN_i}, \mathbf{A}_{-\cN_i}) \right]$ through a doubly robust moment equation, under a standard network ignorability condition, along with consistency and positivity; a key result which appears to be new to the interference literature.

 In the literature on doubly robust or AIPW inference under i.i.d. sampling,  
 an i.i.d. version of the term $\mathrm{P}\left( \mathbf{A}_{\cN_i} = \mathbf{a}_{\cN_i} \mid \mathbf{L}, \mathbf{A}_{-\cN_i} \right)$ is commonly referred to as the propensity score or propensity model, and a corresponding i.i.d. version of $\mE[Y_i \mid \mathbf{Y}_{-i}, \mathbf{A}_{\cN_i}=\mathbf{a}_{\cN_i}, \mathbf{A}_{-\cN_i}, \mathbf{L}]$ is referred to as the outcome model. Based on this, we will refer to these models in the network setting as network propensity score and network outcome regression models, respectively. 
 Importantly, this identification result is doubly robust in the sense that \eqref{eqn:identification1} holds as long as either the network propensity model or the network outcome regression model is correctly specified.

Theorem \ref{thm:identification1} may be viewed as a generalization of the network g-formula of \cite{auto-g}, although there are fundamental conceptual differences. First, the counterfactual quantity targeted by \cite{auto-g}, $\mE\left[ Y_i(\mathbf{a}) \right]$, considers a global intervention on the treatment of all units over the network, while the target of inference in Theorem \ref{thm:identification1} identifies the counterfactual mean $\mE\left[ Y_i(\mathbf{a}_{\cN_i}, \mathbf{A}_{-\cN_i}) \right]$ under a more local intervention over the treatment allocation of neighboring units, averaging over the observed marginal distribution of the treatment of all other units in the network.\footnote{\label{fn:weak-identification}Notably, while the global causal parameter $\mE\left[ Y_i(\mathbf{a}) \right]$ remains identified under network consistency, positivity and the relatively weaker network ignorability condition $\mathbf{A} \independent Y_i(\mathbf{a}) \mid \mathbf{L}   \text{ for all }   \mathbf{a} \in \mathscr{A}(n),$ via the auto-g-computation algorithm of \cite{auto-g}, we show in Section \ref{app:extra-identification} of the Appendix that under the weaker ignorability condition, the local parameter $\mE\left[ Y_i(\mathbf{a}_{\cN_i}, \mathbf{A}_{-\cN_i}) \right]$ is identified by the NAIPW only if the network propensity score is correctly specified and the outcome regression is set to a function which does not depend on $\mathbf{Y}_{-i}$, and not otherwise.} In experimental settings where the treatment allocation mechanism and therefore inverse probability weight is known by design, by double robustness, the identifying moment equation readily provides an unbiased estimator for the target parameter even if the unknown network outcome regression is set to an arbitrary/incorrect value.  In contrast, the identification result in itself may be of limited value in observational settings where neither nuisance function is known apriori, as they are technically not statistically identified from the observed data without an additional restriction, due to the fundamental limitation of having observed a single realization of a network of heterogeneous dependent units. 

A related observation on the fundamental limits of statistical identification in network-based counterfactual inference was made by \cite{auto-g}, who motivated the use of a Chain Graph model as a path forward for statistical inference via the so-called auto-g-computation algorithm, a network generalization of Robins' foundational g-computation algorithm. A conceptually analogous strategy will be adopted in the sections that follow. However, as we will show, our proposed framework—which focuses on identification through Markov assumptions motivated by the Chain Graph model—offers several robustness advantages over auto-g-computation.

\section{The Markov Properties for Network Data}\label{sec:markov}
The identification result in Theorem~\ref{thm:identification1} highlights a doubly robust moment condition but does not immediately yield a feasible estimation strategy. In particular, both the network propensity score and outcome regression terms depend on information from the entire network, such as the full covariate vector $\mathbf{L}$ and the complete treatment assignment vector $\mathbf{A}$. To address this challenge,  we introduce a set of \textit{Markov property} assumptions that underlie our identification strategies. These assumptions allow us to reduce the dimensionality of the outcome regression and propensity score models, making them tractable for estimation.

\begin{assumption}[Markov Property]\label{ass:markov}
    \quad
    \begin{enumerate}
        \item[(i)] The outcome process $\mathbf{Y} \mid \mathbf{A}, \mathbf{L}$ satisfies:
        \begin{align*}
            Y_i \independent \{Y_k, A_k, L_k\} \mid \left(A_i, L_i, \{Y_j, A_j, L_j : j \in \cN_n^\partial(i;1)\}\right) \text{ for all } i \text{ and } k \notin \cN_n(i;1)~.
        \end{align*}
        
        \item[(ii)] The treatment assignment process $\mathbf{A} \mid \mathbf{L}$ satisfies:
        \begin{align*}
            \mathbf{A}_{\cN_n(i;\ell)} &\independent \{A_k, L_k\} \mid \left(\mathbf{L}_{\cN_n(i;\ell)}, \{A_j, L_j : j \in \cN_n^\partial(i;\ell+1)\}\right)
        \end{align*}
        for all $i$ and $k\notin \cN_n(i;\ell+1)$ and $\ell \geq 0$.
    \end{enumerate}
\end{assumption}
The Markov properties in Assumption~\ref{ass:markov} are motivated by a class of graphical statistical models for network data, commonly referred to as Chain Graph models \citep{lauritzen2002chain}. Their global Markov properties naturally imply the specific conditional independence relations stated in Assumption~\ref{ass:markov}. (See Appendix~\ref{app:chain-graph} for additional background on Chain Graph models and their global Markov properties.) These models provide formal justification for both the conditional independence assumptions and the use of parametric auto-regression models, which we later recommend as a natural approach for estimating nuisance functions. Importantly, these Markov assumptions allow for both long-range dependence in the data and long-range interference, while also enabling localization of the nuisance functions—thereby making the estimation problem more tractable despite the presence of complex network dependencies. Moreover, these assumptions yield testable implications: unconditional moment conditions can be derived from the conditional independence restrictions, enabling the construction of overidentification tests.

\section{Main Results}\label{sec:main-result}
In this section, we present our main identification results. Following identification, we then introduce an AAIPW estimator and establish its consistency and asymptotic normality. Our asymptotic results rely on recent developments in the theory of weak dependence in networks, specifically the framework developed in \cite{kojevnikov2021limit}.

\subsection{Identification with Markov Assumptions}
We begin with the following proposition, which mitigates the high-dimensional conditioning problem by localizing the relevant information sets using Markov assumptions.
\begin{proposition}\label{prop:models}
    Under Assumption~\ref{ass:markov}(i), we have
    \begin{equation*}
        \mE[Y_i \mid \mathbf{Y}_{-i}, \mathbf{A} = \mathbf{a}, \mathbf{L}] = \mE[Y_i \mid \mathbf{A}_{\cN_i} = \mathbf{a}_{\cN_i}, L_i, \{Y_j, L_j : j \in \cN^\partial_n(i;1)\}]~.
    \end{equation*}
    Under Assumption~\ref{ass:markov}(ii), for any $K \geq 1$, we have
    \begin{equation*}
        \mP\left(\mathbf{A}_{\cN_i} = \mathbf{a}_{\cN_i} \mid \mathbf{L}, \mathbf{A}_{-\cN_i}\right) = \mP(\mathbf{A}_{\cN_i} = \mathbf{a}_{\cN_i} \mid \{L_j : j \in \cN_n(i, K+1)\}, \{A_j : j \in \cN_n^\partial(i; K+1)\})~.
    \end{equation*}
\end{proposition}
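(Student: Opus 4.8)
The plan is to prove both identities in the same way: use the global Markov property of the relevant chain graph to collapse a high-dimensional conditioning set down to the Markov blanket of the target variable, and then reconcile that Markov blanket with the conditioning set claimed in the proposition by elementary set algebra over the network neighborhoods $\cN_n(i;s)$. Throughout, note that $K\ge 1$ forces $\cN_n^\partial(i;1)\subseteq\cN_n(i;1)\subseteq\cN_i$, which is what makes the ``intermediate-distance'' units below nonempty in general and harmless here.

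\textbf{First identity.} Let $\mathcal{M}_i=\{A_i,L_i\}\cup\{A_j,L_j,Y_j:j\in\cN_n^\partial(i;1)\}$ denote the Markov blanket of $Y_i$ in $\mathcal{G}_{\mathcal{E}_n}$. I would first record that Assumption~\ref{ass:outcome}, being the global Markov property of the outcome chain graph (the displayed pairwise statement is only one instance of it), delivers the joint conditional independence $Y_i\independent\{Y_k,A_k,L_k:k\notin\cN_n(i;1)\}\mid\mathcal{M}_i$; since $\mathcal{M}_i$ together with $\{Y_k,A_k,L_k:k\notin\cN_n(i;1)\}$ exhausts $(\mathbf{Y}_{-i},\mathbf{A},\mathbf{L})$, this is exactly $\mE[Y_i\mid\mathbf{Y}_{-i},\mathbf{A},\mathbf{L}]=\mE[Y_i\mid\mathcal{M}_i]$ (equivalently $f(Y_i\mid\mathbf{Y}_{-i},\mathbf{A},\mathbf{L})=f(Y_i\mid\mathcal{M}_i)$, already noted in Section~\ref{subsec:chain-graph-intro}). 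Next, because $\{A_k:k\in\cN_i\setminus\cN_n(i;1)\}$ is a subvector of $\{A_k:k\notin\cN_n(i;1)\}$, the same independence gives $Y_i\independent\{A_k:k\in\cN_i\setminus\cN_n(i;1)\}\mid\mathcal{M}_i$, so adjoining these intermediate-distance treatments to the conditioning set leaves the conditional mean unchanged: $\mE[Y_i\mid\mathcal{M}_i]=\mE[Y_i\mid\mathcal{M}_i,\{A_k:k\in\cN_i\setminus\cN_n(i;1)\}]$. Finally I would observe that the enlarged conditioning set reorganizes exactly as $\{L_i\}\cup\{Y_j,L_j:j\in\cN_n^\partial(i;1)\}\cup\mathbf{A}_{\cN_i}$, since the $A$-entries at $i$, at $\cN_n^\partial(i;1)$, and at $\cN_i\setminus\cN_n(i;1)$ together make up $\mathbf{A}_{\cN_i}$; evaluating the resulting almost-sure identity on the event $\{\mathbf{A}=\mathbf{a}\}$ (which has positive probability by Assumption~\ref{ass:positivity}) yields the claim.

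\textbf{Second identity.} The argument runs one neighborhood ``out.'' Let $\mathcal{B}_i=\mathbf{L}_{\cN_i}\cup\{A_j,L_j:j\in\cN_n^\partial(i;K+1)\}$, which is the Markov blanket of the block $\mathbf{A}_{\cN_i}$ recorded in the second display of Assumption~\ref{ass:propensity}. Taking the joint form of that display (again justified by the global Markov property of the propensity chain graph) gives $\mathbf{A}_{\cN_i}\independent\{A_k,L_k:k\notin\cN_n(i;K+1)\}\mid\mathcal{B}_i$. I would then verify the two set identities $\mathcal{B}_i\cup\{A_k,L_k:k\notin\cN_n(i;K+1)\}=\mathbf{L}\cup\mathbf{A}_{-\cN_i}$ — using $\cN_n(i;K+1)=\cN_i\sqcup\cN_n^\partial(i;K+1)$ together with $\{k:d_n(i,k)\ge K+1\}=\cN_n^\partial(i;K+1)\sqcup\{k:k\notin\cN_n(i;K+1)\}$ — and $\mathcal{B}_i=(L_j:j\in\cN_n(i;K+1))\cup(A_j:j\in\cN_n^\partial(i;K+1))$. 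The conditional independence then collapses $\mP(\mathbf{A}_{\cN_i}=\mathbf{a}_{\cN_i}\mid\mathbf{L},\mathbf{A}_{-\cN_i})$ to $\mP(\mathbf{A}_{\cN_i}=\mathbf{a}_{\cN_i}\mid\mathcal{B}_i)$, which is precisely the stated expression.

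\textbf{Main obstacle.} Nothing here is computationally heavy; the one point requiring care is that Assumptions~\ref{ass:outcome} and~\ref{ass:propensity} are phrased as pairwise (``for each $k\notin\cN_n(i;\cdot)$'') conditional independences, whereas the proof needs the corresponding \emph{joint} independences $Y_i\independent\{\,\cdot:k\notin\cN_n(i;1)\,\}\mid\mathcal{M}_i$ and $\mathbf{A}_{\cN_i}\independent\{\,\cdot:k\notin\cN_n(i;K+1)\,\}\mid\mathcal{B}_i$. Pairwise conditional independence does not imply its joint counterpart in general, but it does here because the stated blankets $\mathcal{M}_i$ and $\mathcal{B}_i$ separate $Y_i$ (resp.\ the block $\mathbf{A}_{\cN_i}$) from the \emph{entire} collection of variables indexed outside the relevant neighborhood in $\mathcal{G}_{\mathcal{E}_n}$; I would make this step explicit, invoking the chain-graph separation/global Markov results of \cite{auto-g, shpitser2024}. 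After that, everything reduces to the neighborhood set-algebra above.
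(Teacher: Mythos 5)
Your proof is correct. The paper supplies no explicit proof of Proposition~\ref{prop:models}, treating both identities as immediate consequences of the Markov blankets implied by Assumptions~\ref{ass:outcome} and~\ref{ass:propensity} (cf.\ Remark~\ref{remark:proposition-model}), and your argument—collapsing each conditioning set to the relevant blanket via the global Markov property and then matching the blanket to the stated conditioning set by neighborhood set algebra—is precisely the reasoning the paper leaves implicit, with your explicit treatment of the pairwise-versus-joint independence issue being a detail the paper glosses over.
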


To simplify notation, we introduce the following definitions for the nuisance functions:
\begin{align}
    \beta_{\mathbf{a}_{\cN_i}}(X_i^y) &= \mE[Y_i \mid \mathbf{A}_{\cN_i} = \mathbf{a}_{\cN_i}, L_i, \{Y_j, L_j : j \in \mathcal{N}^\partial_n(i;1)\}]~, \label{eqn:nuisance-function1}\\
    \pi_{\mathbf{a}_{\cN_i}}(X_i^a) &= \mP(\mathbf{A}_{\cN_i} = \mathbf{a}_{\cN_i} \mid \{L_j : j \in \cN_n(i, K+1)\}, \{A_j : j \in \mathcal{N}^\partial_n(i;K+1)\})~.\label{eqn:nuisance-function2}
\end{align}
$X_i^y = (L_i, (Y_j, L_j: j\in \mathcal{N}^\partial_n(i;1)))$ and $X_i^a = ((L_j: j \in \mathcal N_n(i,K+1)), (A_j:j\in \mathcal N_n^\partial(i;K+1) ))$ represent the ``covariates'' used in the outcome and propensity models, respectively. Following the terminology introduced in \cite{auto-g}, we refer to $\beta_{\mathbf{a}_{\cN_i}}(X_i^y)$ and $\pi_{\mathbf{a}_{\cN_i}}(X_i^a)$ as \emph{Auto-functions}, reflecting their interpretation as network-based auto-regression models. For example, the outcome Auto-function involves regressing unit $i$'s outcome on its neighbors’ outcomes, which serves as a spatial analogue of ``lags'' defined by network distance. We refer to the network AIPW (NAIPW) function with Auto-functions substituted in as the Auto-AIPW (AAIPW) function. The following theorem establishes an identification result based on the AAIPW function.

\begin{proposition}[Identification for AAIPW Estimation (local intervention)]\label{thm:identification1-markov}
Suppose Assumptions \ref{ass:consistency}--\ref{ass:positivity} hold. Then the following identification result holds:
\begin{equation*}
    \mE\left[  \frac{1\{ \mathbf{A}_{\cN_i} = \mathbf{a}_{\cN_i} \} }{\pi_{\mathbf{a}_{\cN_i}}(X_i^a)} \left(Y_i - \beta_{\mathbf{a}_{\cN_i}}(X_i^y)\right) + \beta_{\mathbf{a}_{\cN_i}}(X_i^y) \right] = \mE\left[ Y_i(\mathbf{a}_{\cN_i}, \mathbf{A}_{-\cN_i}) \right]~.
\end{equation*}
provided that either Assumption~\ref{ass:markov}(i) or Assumption~\ref{ass:markov}(ii) holds.
\end{proposition}

\subsection{Identification for Global Intervention}\label{subsec:main1}
To strengthen the identification result in Theorem~\ref{thm:identification1}, we now consider conditions under which the average potential outcome under global interventions, $\mE[Y_i(\mathbf{a})]$, can be identified. We first consider the following restricted interference assumption.
\begin{assumption}[Restricted Interference]\label{ass:interference}
    For all units $i \in [n]$ and any two treatment assignments $\mathbf{a}, \mathbf{a}^\prime \in \mathscr{A}(n)$,
    \begin{equation*}
        Y_i(\mathbf{a}) = Y_i(\mathbf{a}^\prime) \quad \text{w.p.1 if } \mathbf{a}_{\cN_i} = \mathbf{a}^\prime_{\cN_i}.
    \end{equation*}
\end{assumption}
This assumption defines the scope of interference in the network. It asserts that each unit’s potential outcome depends only on the treatment assignments within its $K$-step neighborhood $\cN_i$, and is invariant to assignments outside this set. 


Under the restricted interference assumption, it is immediate that the global parameter coincides with the local one. Perhaps more surprisingly, Assumption~\ref{ass:markov}(i)—which imposes a Markov property on the outcome process—also suffices for identification. The following theorem presents the strengthened identification result, showing that either the restricted interference assumption or the outcome model’s Markov property alone is sufficient to identify $\mE[Y_i(\mathbf{a})]$.

\begin{theorem}[Strengthened Identification]\label{thm:identification2}
    Suppose Assumption~\ref{ass:consistency}-\ref{ass:positivity} holds. Then, under either Assumption~\ref{ass:interference} or Assumption~\ref{ass:markov}(i), for any $K \geq 1$, we have:
    \begin{equation}\label{eqn:identification2}
    \begin{aligned}
        &\mE\left[  \frac{1\left\{ \mathbf{A}_{\cN_i} = \mathbf{a}_{\cN_i} \right\} \left(Y_i - \mE[Y_i \mid \mathbf{Y}_{-i}, \mathbf{A}_{\cN_i}=\mathbf{a}_{\cN_i}, \mathbf{A}_{-\cN_i}, \mathbf{L}]\right)}{\mP\left( \mathbf{A}_{\cN_i} = \mathbf{a}_{\cN_i} \mid \mathbf{L}, \mathbf{A}_{-\cN_i} \right)} + \mE[Y_i \mid \mathbf{Y}_{-i}, \mathbf{A}_{\cN_i}=\mathbf{a}_{\cN_i}, \mathbf{A}_{-\cN_i}, \mathbf{L}] \right]\\
        &= \mE\left[ Y_i(\mathbf{a}) \right]~.
    \end{aligned}
    \end{equation}
\end{theorem}
Theorem~\ref{thm:identification2} establishes a double robustness property in identification of $\mE[Y_i(\mathbf{a})]$: if either the restricted interference assumption (Assumption~\ref{ass:interference}) or the Markov assumption for the outcome (Assumption~\ref{ass:markov}(i)) holds, the same NAIPW function identifies the average potential outcome $\mE[Y_i(\mathbf{a})]$. In this sense, the moment condition is robust to misspecification of one of the two assumptions.

\begin{remark}
    It is straightforward to see why Assumption~\ref{ass:interference} suffices: under this restricted interference assumption, the treatment assignments outside of $\cN_i$ do not affect $Y_i$, so that $\mE[Y_i(\mathbf{a})]  = \mE[Y_i(\mathbf{a}_{\cN_i}, \mathbf{A}_{-\cN_i})]$. What is less immediate—and perhaps more interesting—is that the Markov assumption on the outcome also implies that $\mE[Y_i(\mathbf{a}_{\cN_i}, \mathbf{A}_{-\cN_i})] = \mE[Y_i(\mathbf{a})]$. This illustrates that the Markov property on the conditional outcome distribution can be interpreted as enforcing a form of restricted interference, making the two target parameters in Theorem \ref{thm:identification1} and \ref{thm:identification2} equivalent. In other words, both assumptions ultimately serve to effectively reduce the dimension of the interference set.
\end{remark}

\begin{remark}
    A further insight from Theorem~\ref{thm:identification2} is that the NAIPW function combines the strengths of both models without compromising the identification that would be attainable under either model alone. To illustrate this, note that under Assumption~\ref{ass:interference}, the inverse probability weighting (IPW) component alone identifies $\mE[Y_i(\mathbf{a})]$ using standard arguments based on iterated expectations and conditional ignorability:
    \begin{align*}
        \mE\left[\frac{1\left\{ \mathbf{A}_{\cN_i} = \mathbf{a}_{\cN_i} \right\} Y_i}{\mP\left( \mathbf{A}_{\cN_i} = \mathbf{a}_{\cN_i} \mid \mathbf{L}, \mathbf{A}_{-\cN_i} \right)}\right] 
        &= \mE\left[\mE\left[\frac{1\left\{ \mathbf{A}_{\cN_i} = \mathbf{a}_{\cN_i} \right\} Y_i(\mathbf{a}_{\cN_i})}{\mP\left( \mathbf{A}_{\cN_i} = \mathbf{a}_{\cN_i} \mid \mathbf{L}, \mathbf{A}_{-\cN_i} \right)} \mid \mathbf{L}, \mathbf{A}_{-\cN_i}\right]\right] \\
        &= \mE\left[\mE[Y_i(\mathbf{a}_{\cN_i}) \mid \mathbf{L}, \mathbf{A}_{-\cN_i}]\right] = \mE[Y_i(\mathbf{a}_{\cN_i})] = \mE[Y_i(\mathbf{a})]~.
    \end{align*}
    Similarly, under Assumption~\ref{ass:markov}(i), the outcome regression term alone identifies $\mE[Y_i(\mathbf{a})]$:
    \begin{align*}
        \mE\left[\mE[Y_i \mid \mathbf{Y}_{-i}, \mathbf{A}_{\cN_i}=\mathbf{a}_{\cN_i}, \mathbf{A}_{-\cN_i}, \mathbf{L}]\right] &=\mE\left[\mE[Y_i \mid \mathbf{Y}_{-i}, \mathbf{A} = \mathbf{a}, \mathbf{L}]\right] \\
        &= \mE\left[\mE[Y_i(\mathbf{a}) \mid \mathbf{Y}_{-i}(\mathbf{a}), \mathbf{A} = \mathbf{a}, \mathbf{L}]\right] \\
        &= \mE\left[\mE[Y_i(\mathbf{a}) \mid \mathbf{Y}_{-i}(\mathbf{a}), \mathbf{L}]\right] = \mE[Y_i(\mathbf{a})]~.
    \end{align*}
    It is therefore notable that a single moment function suffices to achieve the same identification result as would be obtained under either the restricted interference or the outcome regression model alone, requiring only one of the two assumptions to hold.
\end{remark}

\begin{remark}
    It is worth noting that Assumption~\ref{ass:markov}(i) can be replaced by a weaker condition stated in terms of the functional form of the conditional expectation:
    \begin{equation}\label{eqn:remark-outcome-model-condexp}
        \mE[Y_i \mid \mathbf{Y}_{-i}, \mathbf{A}=\mathbf{a}, \mathbf{L}] = \mE[Y_i \mid \mathbf{Y}_{-i}, \mathbf{A}_{\cN_i}=\mathbf{a}_{\cN_i}, \mathbf{L}]~.
    \end{equation}
    This condition requires that the conditional outcome function depends only on the treatment assignments within the $K$-degree neighborhood.
\end{remark}

Similar to Theorem~\ref{thm:identification1}, the identification result in Theorem~\ref{thm:identification2} does not immediately yield a feasible estimation strategy due to high dimensionality of nuisance functions. By incorporating the Markov assumption in the propensity model, we obtain the following identification result for estimation.

\begin{proposition}[Identification for AAIPW Estimation (global intervention)]\label{thm:identification2-markov}
Suppose Assumptions \ref{ass:consistency}--\ref{ass:positivity} hold. Then the following identification result holds:
\begin{equation*}
    \mE\left[  \frac{1\{ \mathbf{A}_{\cN_i} = \mathbf{a}_{\cN_i} \} }{\pi_{\mathbf{a}_{\cN_i}}(X_i^a)} \left(Y_i - \beta_{\mathbf{a}_{\cN_i}}(X_i^y)\right) + \beta_{\mathbf{a}_{\cN_i}}(X_i^y) \right] = \mE\left[ Y_i(\mathbf{a}) \right]~.
\end{equation*}
provided that one of the following two conditions holds:
\begin{enumerate}
    \item Assumption~\ref{ass:markov}(i) holds; or
    \item Assumptions~\ref{ass:markov}(ii) and~\ref{ass:interference} hold
\end{enumerate}
\end{proposition}

\subsection{AAIPW Estimators}\label{subsec:main2}
In this section, we propose a feasible AAIPW estimator based on our identification results using AAIPW functions. Specifically, we adopt a parametric specification of the Auto-functions and leave the nonparametric approach for future work. To begin with, we assume the following parametric specifications for Auto-functions defined in \eqref{eqn:nuisance-function1} and \eqref{eqn:nuisance-function2}.

\begin{assumption}\label{ass:param}
Auto-functions $\beta_{\mathbf{a}_{\cN_i}}(X_i^y)$ and $\pi_{\mathbf{a}_{\cN_i}}(X_i^a)$ satisfy
\begin{itemize}
    \item[(i)]$\beta_{\mathbf{a}_{\cN_i}}(X_i^y)$ can be parameterized by $\theta_0$ in the sense that  $ \beta_{\mathbf{a}_{\cN_i}}(X_i^y) = \beta_{\mathbf{a}_{\cN_i}}(X_i^y;\theta_0)$ for some finite-dimensional $\theta_0$.
    \item[(ii)] $\pi_{\mathbf{a}_{\cN_i}}(X_i^a)$ can be parameterized by $\eta_0$ in the sense that $\pi_{\mathbf{a}_{\cN_i}}(X_i^a) = \pi_{\mathbf{a}_{\cN_i}}(X_i^a;\eta_0)$ for some finite-dimensional $\eta_0$.
\end{itemize}
\end{assumption}

Let $\beta_{\mathbf{a}_{\cN_i}}(X_i^y;\widehat\theta)$ and $\pi_{\mathbf{a}_{\cN_i}}(X_i^a;\widehat\eta)$ denote the estimated outcome and propensity model based on the parametric models. Specifically,  consider estimating the outcome with the method of moments. Note that
\begin{equation*}
    \mE[Y_i - \beta_{\mathbf{a}_{\cN_i}}(X_i^y;\theta) \mid X_i^y] = 0 \Longrightarrow \mE[(Y_i - \beta_{\mathbf{a}_{\cN_i}}(X_i^y;\theta) )h(X_i^y;\theta)] = 0
\end{equation*}
where $h$ is a function to be specified by the user and has the same dimension as $\theta$ so that we have exact identification. Define $g^y(Y_i, X_i^y;\theta) = (Y_i - \beta_{\mathbf{a}_{\cN_i}}(X_i^y;\theta) )h(X_i^y;\theta)$, then we have moment conditions $\mE[g^y(Y_i),X_i^y;\theta]=0$ for the outcome model and our outcome model parameter $\widehat\theta$ can be estimated by solving the sample moment equations: $\frac{1}{n}\sum_{i=1}^n g^y(Y_i, X_i^y;\widehat\theta)=0$. Similarly, $\widehat\eta$ can be solved based on some moment conditions $\mE[g^a(\mathbf{A}_{\cN_i}, X_i^a; \eta)]=0$. We then make the following assumptions about the estimator:
\begin{assumption}\label{ass:estimator_consistent}
The estimators for the outcome model and the propensity score model converge in probability to fixed (possibly misspecified) limits: $\widehat\theta \xrightarrow{p} \theta$ and $\widehat\eta \xrightarrow{p} \eta$ for some parameters $\theta$ and $\eta$.
\end{assumption}

This assumption states that the estimators stabilize asymptotically, converging to well-defined probability limits.
Notably, we do not assume that the probability limits coincide with the true  parameters of either the outcome model or the propensity model. If Assumption \ref{ass:param}(i) or \ref{ass:param}(ii) holds, and we have $\theta = \theta_0$ or $\eta = \eta_0$, then we ensure convergence to the desired outcome or propensity model in probability, respectively. Consider the case in which the propensity score model is correctly specified while the outcome model is misspecified, Assumption \ref{ass:estimator_consistent} still assumes that the resulting estimator for the outcome, $\widehat\theta$, converges to some fixed value $\theta$ and the following properties hold:
\begin{align*}
     \beta_{\mathbf{a}_{\cN_i}}(X_i^y;\theta) &\neq \mE[Y_i \mid \mathbf{A}_{\cN_i} = \mathbf{a}_{\cN_i}, L_i, (Y_j, L_j: j\in \mathcal{N}^\partial_n(i;1))]~; \\
     \pi_{\mathbf{a}_{\cN_i}}(X_i^a;\eta) &= \mP(\mathbf{A}_{\cN_i} = \mathbf{a}_{\cN_i}\mid (L_j: j \in \cN_n(i,2)), (A_j:j\in \mathcal N_n^\partial(i;K+1)) ~.
\end{align*}
The reverse case, where the outcome model is correctly specified and the propensity model is misspecified, can be formulated analogously. When the parameters are estimated via the generalized method of moments, Proposition \ref{prop:gmm-consistency} in the Appendix verifies this convergence under the weak network-dependence conditions introduced in Section \ref{subsec:main3clt}. Therefore, in the GMM setting, the pseudo-likelihood moment conditions guarantee consistency even when the models are misspecified, as allowed in Assumption \ref{ass:estimator_consistent}. With such parametric specifciation, we are ready to introduce our estimator based on the AAIPW function. Specifically, we construct
\[
\widehat \mu_{\mathrm{AIPW}} = \frac{1}{n} \sum_{i=1}^n \widehat W_i,
\]
where
\begin{equation}\label{eqn:AIPW}
    \widehat W_i = \frac{1\{ \mathbf{A}_{\cN_i} = \mathbf{a}_{\cN_i} \}}{\pi_{\mathbf{a}_{\cN_i}}(X_i^a; \widehat\eta)} \left(Y_i - \beta_{\mathbf{a}_{\cN_i}}(X_i^y; \widehat\theta)\right) + \beta_{\mathbf{a}_{\cN_i}}(X_i^y; \widehat\theta)~,
\end{equation}
and $\pi_{\mathbf{a}_{\cN_i}}(X_i^a; \widehat\eta)$ and $\beta_{\mathbf{a}_{\cN_i}}(X_i^y; \widehat\theta)$ denote the estimated network propensity score and outcome regression, respectively.

Although we consider general parametric specifications in this section, we present a specific modeling approach in Appendix~\ref{subsec:chain-graph-parametric}. This approach is based on the \emph{chain graph} model—a graphical statistical framework introduced by \cite{auto-g, shpitser2024}. A key advantage of the chain graph framework in the context of network data is that it provides a coherent specification for the joint distribution of observed variables across the network. In contrast, specifying parametric or nonparametric models for the marginal distributions of the outcome and treatment processes may not induce a compatible joint distribution, thereby limiting their validity in network settings.

\subsection{Asymptotic Property of the AAIPW Estimator}\label{subsec:main3clt}
In this section, we establish the asymptotic properties of the AAIPW estimator, which relies on the weak dependence assumption on the network data. Specifically, we adopt the $\psi$-network dependence framework in \cite{kojevnikov2021limit} to characterize weak dependence in network settings and assume that $Z_i = (Y_i,A_i,L_i')'$ exhibits $\psi$-network dependence with appropriate weak dependent coefficients. This ensures that the dependence structure of $W_i$ on the network is sufficiently weak for the law of large numbers and central limit theorems to hold for the AAIPW estimator. To formalize this, we first introduce the definition of $\psi$-dependence network along with the necessary notation. 

Let $\cP_n(q_1, q_2;s)$ denote the collection of two sets of nodes of size $q_1$ and $q_2$ with distance between each other of at least $s$. Formally, 
$\cP_n(q_1,q_2;s) = \{(Q_1,Q_2): Q_1, Q_2 \subset [n], |Q_1| =q_1, |Q_2|=q_2, \mbox{ and  } d_n(Q_1,Q_2) \geq s\}.$ 
In addition, $\mathcal{L}_{v,q_1}$ and $\mathcal{L}_{v,q_2}$ denote the collection of bounded Lipschitz real functions on $\mathbb{R}^{v\times q_1}$ and $\mathbb{R}^{v\times q_2},$ respectively. Using this notation, we can now state the definition of $\psi$-network dependence as follows:
\begin{definition}[$\psi-$Network Dependence]
  The triangular array $\{Z_{i}\}_{i \in [n]}, Z_{i} \in \mathbb{R}^v$, is called conditionally $\psi$-dependent if there exist a uniformly bounded sequence
  $\zeta_n = \{\zeta_{n,s}\}, s \geq 0, \zeta_{n,0} = 1$, and a collection of functionals, 
  $(\psi_{q_1,q_2})_{q_1,q_2\in\mathbb{N}}, \ \psi_{q_1,q_2}: \mathcal{L}_{v,q_1}
  \times \mathcal{L}_{v,q_2} \rightarrow [0, \infty)$, such that for all
  $(Q_1, Q_2) \in \cP_n(q_1, q_2;s)$ with $s > 0$ and all $f_1 \in
  \mathcal{L}_{v,q_1}$ and $f_2 \in
  \mathcal{L}_{v,q_2},$
  \begin{equation}
    |\Cov(f_1(\mathbf{Z}_{Q_1}), f_2(\mathbf{Z}_{Q_2}))| \leq \psi_{q_1,q_2}(f_1,f_2) \zeta_{n,s}  \ \ \  \mbox{a.s.}
  \end{equation}
\end{definition}
Then we are ready to introduce the assumptions we need. 
\begin{assumption}\label{ass:network_weak_dependence}
  The array $Z_{i}=(Y_i,A_i,L_i')'$ is conditionally $\psi$-weakly dependent with the coefficients $\zeta_{n,s}$ and functionals $\psi_{q_1,q_2}$ and there exist cosntants $\lambda_1 > 4$, $\lambda_2 > 3/2$, $\lambda_3>0$ and a sequence $\xi_n \rightarrow \infty$ such that $Z_i$, $\zeta_n$ and $\psi_{q_1,q_2}$ satisfy:
  \begin{itemize}
  \item[(i)] $\sup_i\mE[Z_i^{\lambda_1}]<\infty$,
  \item[(ii)] $\sup_{n \geq 1} \max_{s \geq 1}
  \zeta_{n,s} < \infty$ and $\zeta_{n, \xi_n}^{1-1/\lambda_1}n^{\lambda_2} = o_{a.s.}(1)$, 
  \item[(iii)] $\psi_{q_1,q_2}(f_1,f_2)\leq \lambda_3 \times q_1q_2 (||f_1||_{\infty} + \mbox{\normalfont
    Lip}(f_1))
  (||f_2||_{\infty} + \mbox{\normalfont
    Lip}(f_2))$,
  \item[(iv)] for each $k \in \{1, 2\},$
    \begin{equation*}
      \cfrac{1}{n^{k/2}} \sum_{s \geq 0} r_n(s, \xi_n; k) \zeta_{n,s}^{1
        - (k+2)/\lambda_1} = o_{a.s.}(1)
    \end{equation*}
    where we define the neighborhood shell: 
    \begin{equation*}
      r_n (s, \xi_n; k) = \inf_{\alpha > 1} \biggl\{\frac{1}{n} \sum_{i \in N_n} \max_{j \in
        \cN_n^\partial(i; s)} |\cN_n(i; \xi_n)
      \setminus \cN_n(j;s-1)|^{k\alpha}\biggr\}^{\frac{1}{\alpha}}
      \times \biggl\{ \frac{1}{n} \sum_{i \in
        N_n} |\cN_n^\partial(i;s)|^{\frac{\alpha}{\alpha-1}}\biggr\}^{1- \frac{1}{\alpha}}~,
    \end{equation*}
 where we use $|\cdot|$ to denote the cardinality of a set.
  \end{itemize}  
\end{assumption}
 $\zeta_{n,s}$ captures how fast network dependence between units decays and Assumption~\ref{ass:network_weak_dependence}(i) ensures the rate $\zeta_{n,\xi_n}$ goes to zero. $r_n(s,\xi_n;k)$ in Assumption~\ref{ass:network_weak_dependence}(iv) measures how dense the network is. Specifically, the two components of $r_n(s,\xi_n;k)$ in the curly brackets capture the denseness of the network through the average neighborhood shell size and average neighborhood sizes, respectively.

Before presenting the main central limit theorem (CLT) result, we first introduce the notation for the network target parameter of our AAIPW estimator:
\begin{equation*}
    \mu_n = \frac{1}{n}\sum_{i=1}^n \mu_i = \frac{1}{n}\sum_{i=1}^n \mE\left[  \frac{1\{ \mathbf{A}_{\cN_i} = \mathbf{a}_{\cN_i} \} }{\pi_{\mathbf{a}_{\cN_i}}(X_i^a;\eta)} \left(Y_i - \beta_{\mathbf{a}_{\cN_i}}(X_i^y;\theta)\right) + \beta_{\mathbf{a}_{\cN_i}}(X_i^y;\theta) \right] ~.
\end{equation*}
As established in Theorem~\ref{thm:identification1-markov} and \ref{thm:identification2-markov}, $\mu_n$ corresponds to a well-defined causal parameter. However, the interpretation of this quantity may differ depending on whether the restricted interference assumption is imposed or not.

The following theorem derives the asymptotic behavior of the estimator:
\begin{theorem}[CLT for AAIPW estimator]\label{thm:clt} 

Suppose Assumption \ref{ass:consistency}-\ref{ass:positivity} holds. Under Assumption \ref{ass:estimator_consistent}, \ref{ass:network_weak_dependence} and regularity assumptions in Appendix \ref{app:proof-clt}, we have:
    \begin{align*}
          \sqrt{n}\left(\widehat \mu_{\mathrm{AIPW}}-\mu_n \right) \overset{d}{\longrightarrow} \mathcal{N}(0,\Lambda).
    \end{align*}
    where $\Lambda = \frac{1}{n}\Var(\sum_{i=1}^n W_i)$ and 
    \begin{align}
        W_i &= \frac{1\{ \mathbf{A}_{\cN_i} = \mathbf{a}_{\cN_i}\} }{\pi_{\mathbf{a}_{\cN_i}}(X_i^a;\eta)} \left(Y_i-\beta_{\mathbf{a}_{\cN_i}}(X_i^y;\theta)\right) +\beta_{\mathbf{a}_{\cN_i}}(X_i^y;\theta) -  \mu_i \label{eqn:dr-if}\\
        &\hspace{6mm} - M_1 \cdot \mathrm{IF}_{ \eta}(\mathbf{A}_{\cN_i},X_i^a)  + M_2 \cdot \mathrm{IF}_{\theta}(Y_i, X_i^y)  ~,\label{eqn:error-if}
    \end{align}
    where
    \begin{align*}
        M_1 &:= \lim_{n\to\infty}\frac{1}{n}\sum_{i=1}^n\mathrm{\mE}\left[ \frac{1\{ \mathbf{A}_{\cN_i} = \mathbf{a}_{\cN_i}\} }{\left(\pi_{\mathbf{a}_{\cN_i}}(X_i^a;\eta)\right)^2} \frac{\partial \pi_{\mathbf{a}_{\cN_i}}(X_i^a;\eta)}{\partial \eta} \left(Y_i-\beta_{\mathbf{a}_{\cN_i}}(X_i^y;\theta)\right)\right]~,\\
        M_2 &:= \lim_{n\to\infty}\frac{1}{n}\sum_{i=1}^n\mathrm{\mE}\left[ \frac{1}{n} \sum_{i=1}^n \left(1 - \frac{1\{ \mathbf{A}_{\cN_i} = \mathbf{a}_{\cN_i}\} }{\pi_{\mathbf{a}_{\cN_i}}(X_i^a;\eta)} \right) \frac{\partial \beta_{\mathbf{a}_{\cN_i}}(X_i^y;\theta)}{\partial \theta} \right]~,
    \end{align*}
    where $\text{IF}_{\theta}(Y_i, X_i^y)$ and $\text{IF}_{\eta}(\mathbf{A}_{\cN_i},X_i^a)$ represent the influence function for the outcome and propensity model estimation, respectively. More specifically, if we use method of moments for both models, then
    \begin{align*}
    \text{IF}_{\theta}(Y_i, X_i^y) &= -M_3^{-1} g^y(Y_i, X_i^y;\theta) ~,\\
    \text{IF}_{\eta}(\mathbf{A}_{\cN_i},X_i^a) &= -M_4^{-1} g^a(\mathbf{A}_{\cN_i}, X_i^a;\eta)~,
    \end{align*}
    where $\mE[g^y(Y_i, X_i^y;\theta)]=0$ and $\mE[g^a(\mathbf{A}_{\cN_i}, X_i^a; \eta)]=0$ represent the moment conditions, and
    \begin{align*}
        &M_3:=\lim_{n\to\infty}\frac{1}{n}\sum_{i=1}^n\mE\left[\frac{\partial g^y(Y_i,X_i^y;\theta)}{\partial\theta}\right]\\
         &M_4: = \lim_{n\to\infty}\frac{1}{n}\sum_{i=1}^n\mE\left[\frac{\partial g^a(Y_i,X_i^a;\eta)}{\partial\eta}\right]~.
    \end{align*}
\end{theorem}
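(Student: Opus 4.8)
\textbf{Proof proposal for Theorem \ref{thm:clt}.}

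The plan is to decompose the estimation error $\sqrt{n}(\widehat\mu_{\mathrm{AIPW}} - \mu_n)$ into a leading linear term plus asymptotically negligible remainders, and then apply the central limit theorem for $\psi$-weakly dependent network arrays from \cite{kojevnikov2021limit}. First I would write $\widehat\mu_{\mathrm{AIPW}} = n^{-1}\sum_i \widehat W_i$ where $\widehat W_i$ is the plug-in quantity in \eqref{eqn:AIPW}, and perform a first-order Taylor expansion of $\widehat W_i$ around the probability limits $(\theta, \eta)$ from Assumption \ref{ass:estimator_consistent}. This yields
\[
\sqrt{n}(\widehat\mu_{\mathrm{AIPW}} - \mu_n) = \frac{1}{\sqrt n}\sum_{i=1}^n (\bar W_i - \mu_i) + B_n \sqrt{n}(\hat\eta - \eta) + C_n \sqrt{n}(\hat\theta - \theta) + R_n,
\]
where $\bar W_i$ denotes $\widehat W_i$ evaluated at $(\theta,\eta)$, $B_n$ and $C_n$ are the sample averages of the $\eta$- and $\theta$-derivatives of $W_i$ (converging to $-M_1$ and $M_2$ respectively by a network law of large numbers), and $R_n$ collects second-order Taylor remainders. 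Next I would substitute the asymptotically linear representations of the nuisance estimators: under the method-of-moments construction and standard regularity conditions, $\sqrt{n}(\hat\theta - \theta) = -M_3^{-1} n^{-1/2}\sum_i g^y(Y_i, X_i^y;\theta) + o_P(1)$ and likewise for $\hat\eta$ with $M_4$, which is itself a consequence of a Taylor expansion of the sample moment equations $n^{-1}\sum_i g^y(Y_i,X_i^y;\hat\theta)=0$ combined with a uniform law of large numbers for the Jacobian. Collecting terms produces exactly the linear statistic $n^{-1/2}\sum_i W_i$ with $W_i$ as defined in \eqref{eqn:dr-if}--\eqref{eqn:error-if}, so it remains to verify the CLT hypotheses for this array.

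The core of the argument is then checking that $\{W_{n,i}\}$ inherits $\psi$-weak network dependence from $\{Z_{n,i}\}$ with coefficients still satisfying Assumption \ref{ass:network_weak_dependence}, and that the moment and denseness conditions of the \cite{kojevnikov2021limit} CLT hold. The key observations are: (a) each $W_{n,i}$ is a fixed (Lipschitz, under the regularity assumptions bounding derivatives of $\pi$ and $\beta$ away from degeneracy and invoking the positivity lower bound $\sigma$) function of $Z_{n,j}$ for $j$ ranging over a bounded-radius neighborhood of $i$ — specifically within distance $K+1$ for the propensity-type terms and distance $2$ for the outcome-type terms — so that $W_{n,i}$ depends on finitely many $Z_{n,j}$ in a local ball; (b) a function supported on an $r$-ball around $i$ of a $\psi$-dependent array is again $\psi$-dependent, with the distance index shifted by a constant $2(K+1)$ and the $\psi$-bound inflated by a factor polynomial in the maximal ball cardinality, which is absorbed into the $r_n(\cdot)$ denseness terms already appearing in Assumption \ref{ass:network_weak_dependence}. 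With $\psi$-dependence of $W_{n,i}$ established, the LLN for the normalization (giving $B_n\to -M_1$, $C_n\to M_2$, $M_3$, $M_4$) and the CLT for $n^{-1/2}\sum_i W_{n,i}$ both follow from the corresponding theorems in \cite{kojevnikov2021limit}, delivering the asymptotic variance $\Lambda = n^{-1}\Var(\sum_i W_i)$.

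The main obstacle I anticipate is twofold. First, handling the remainder $R_n$: the second-order terms involve products such as $(\hat\eta - \eta)^{\otimes 2}$ times network-averaged second derivatives of $1/\pi_{\mathbf{a}_{\cN_i}}$, and one must show these are $o_P(n^{-1/2})$ uniformly, which requires a uniform-in-parameter control of the relevant empirical processes over a neighborhood of $(\theta,\eta)$ — non-trivial in the dependent-data setting, and presumably where the "regularity assumptions in Appendix \ref{app:proof-clt}" do the work (stochastic equicontinuity or a bracketing/maximal-inequality argument adapted to $\psi$-dependence). Second, and more delicate conceptually, is the double-robustness bookkeeping: because $\mu_n$ is defined as the population analogue at the \emph{same} $(\theta,\eta)$ rather than at correctly specified nuisances, the centering is automatic and no bias term survives, so I would emphasize that $M_1$ and $M_2$ need not vanish (unlike in the classical correctly-specified case) — and this is precisely why the influence-function correction terms $-M_1 \text{IF}_\eta + M_2 \text{IF}_\theta$ must be retained in $W_i$. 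Verifying that the cross-covariances between the $\bar W_i - \mu_i$ part and the $\text{IF}$ parts are correctly accounted for within the single variance $\Lambda$ — rather than needing separate treatment — is the bookkeeping step I would be most careful about.
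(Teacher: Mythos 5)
Your proposal follows essentially the same route as the paper's proof: a first-order Taylor expansion of the plug-in statistic around the probability limits $(\theta,\eta)$, substitution of the asymptotically linear (method-of-moments) representations $\sqrt{n}(\hat\theta-\theta) = -M_3^{-1}n^{-1/2}\sum_i g^y + o_p(1)$ and its analogue for $\hat\eta$, an LLN for the derivative averages converging to $M_1$ and $M_2$, and then the \cite{kojevnikov2021limit} CLT applied to the collected linear statistic $n^{-1/2}\sum_i W_i$, with $\psi$-dependence of $W_i$ inherited because it is a local function of $Z_{n,j}$ over a bounded-radius neighborhood. The only step you gloss over that the paper treats explicitly is that the derivative terms $H_{1,i}, H_{2,i}$ are not globally Lipschitz, which the paper handles by a truncation-at-level-$n$ argument combined with Borel--Cantelli before invoking the network LLN; otherwise the arguments coincide.
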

\begin{remark}
    Note that \eqref{eqn:dr-if}) does not correspond to our parameter of interest unless we assume either $\beta_{\mathbf{a}_{\cN_i}}(X_i^y;\theta) = \beta_{\mathbf{a}_{\cN_i}}(X_i^y)$ or $\pi_{\mathbf{a}_{\cN_i}}(X_i^a;\eta) = \pi_{\mathbf{a}_{\cN_i}}(X_i^a)$. Moreover,
    \eqref{eqn:error-if} characterizes the contribution of model misspecification to the influence function in \eqref{eqn:dr-if}. Specifically,
    \begin{align*}
        M_1 &= 0 \quad \text{if Assumption~\ref{ass:param}(i) holds and $\theta_0 = \theta$}~, \\
        M_2 &= 0 \quad \text{if Assumption~\ref{ass:param}(ii) holds and $\eta_0 = \eta$}~.
    \end{align*}
    That is, the asymptotic distribution varies depending on whether one, both, or neither of the models are correctly specified.
\end{remark}

\begin{remark}\label{remark:hac-inference}
    The asymptotic result in Theorem~\ref{thm:clt}, together with the framework developed in \cite{kojevnikov2021limit}, yields an immediate and practical approach to statistical inference. In particular, we adopt the Network HAC variance estimator proposed in Section 4 of \cite{kojevnikov2021limit}, which takes the form:
    \begin{equation*}
        \widehat{\Lambda}_n = \sum_{s \geq 0} \omega_n(s) \widehat{\Omega}_n(s)~,
    \end{equation*}
    where
    \begin{equation*}
        \widehat{\Omega}_n(s) = \frac{1}{n} \sum_{1 \leq i \leq n} \sum_{j \in \cN_n^\partial(i; s)} \left(\widehat W_i - \widehat\mu_{\mathrm{AIPW}}\right) \left(\widehat W_j - \widehat\mu_{\mathrm{AIPW}}\right)~,
    \end{equation*}
    and $\omega_n(s)$ is a user-specified kernel function. In our simulation study, we employ the commonly used Bartlett kernel.
\end{remark}



\subsection{Identification with Exposure Mapping} \label{subsec:exposure}

In this section, we extend the identification framework to accommodate a common modeling approach in the presence of network interference: the use of exposure mappings. Exposure mappings impose a structured form of interference by assuming that a unit's potential outcome depends on a lower-dimensional summary of the treatment assignment vector, rather than the full vector itself. This assumption is particularly appealing in practice, as it reduces the complexity of the treatment space and facilitates estimation when the local neighborhood of a unit is large.

The key motivation behind this extension is the high dimensionality of the treatment assignment vector $\mathbf{A}_{\cN_i}$, particularly in dense networks where each unit may have many neighbors. In such cases, the number of possible configurations of $\mathbf{A}_{\cN_i}$ grows exponentially with the neighborhood size, rendering nonparametric identification and estimation infeasible. Exposure mappings provide a principled way to project $\mathbf{A}_{\cN_i}$ onto a lower-dimensional space while preserving the relevant causal structure.

Formally, let $\mathscr{A}^*(n) = \cup_{m=1}^n \mathscr{A}(m)$ denote the set of all binary vectors of length up to $n$, representing all possible local treatment assignments. We define a general exposure mapping as a function $T: \mathscr{A}^*(n) \to \mathcal{T}$, where $\mathcal{T} \subset \mathbb{R}^{d}$ is a finite set. Given a fixed neighborhood size $K$, we define $T_i = T(\mathbf{A}_{\cN_i})$ to be the exposure level of unit $i$ implied by its local assignment vector.

Several examples of exposure mappings used in the literature illustrate this structure. One canonical example is
$
T_i = (A_i, \textstyle\sum_{j \in \cN_n^\partial(i; 1)} A_j),
$
which records unit $i$'s own treatment status along with the number of treated neighbors. These mappings reduce the dimensionality of the local assignment space from $2^{|\cN_n(i;1)|}$ to $|\cN_n^\partial(i; 1)|$, a more manageable number of exposure levels, enabling more efficient estimation of exposure-specific average potential outcomes.

The following theorem presents a new moment function under the exposure mapping framework and establishes an identification result using only the ignorability assumption.
\begin{theorem}\label{thm:exposure-1}
    Under Assumption \ref{ass:consistency}-\ref{ass:positivity}, for any $K\geq 1$, we have:
    \begin{align}\label{eqn:exposure-1}
    \begin{aligned}
        &\mE\left[  \frac{1\left\{ T_i = t \right\} \left(Y_i-\mE[Y_i \mid \mathbf{Y}_{-i}, \mathbf{A}_{\cN_i}=\mathbf{a}_{\cN_i}, \mathbf{A}_{-\cN_i}, \mathbf{L}]\right)}{\mP\left( T_i = t\mid \mathbf{L}, \mathbf{A}_{-\cN_i} \right)} + \mE[Y_i \mid \mathbf{Y}_{-i}, \mathbf{A}_{\cN_i}=\mathbf{a}_{\cN_i}, \mathbf{A}_{-\cN_i}, \mathbf{L}]\right]\\
        &= \mE\left[\sum_{\mathbf{a}_{\cN_i}} \mP(\mathbf{A}_{\cN_i}=\mathbf{a}_{\cN_i}\mid T_i = t,\mathbf{L},\mathbf{A}_{-\cN_i})  Y_i(\mathbf{a}_{\cN_i}, \mathbf{A}_{-\cN_i}) \right] ~.
    \end{aligned}
    \end{align}
\end{theorem}
Note that the network propensity score in Theorem~\ref{thm:exposure-1} is a marginalized version of the original propensity score in Theorem~\ref{thm:identification1} over all treatment allocations compatible with $T_i = t$. That is,
\[
\mP\left( T_i = t \mid \mathbf{L}, \mathbf{A}_{-\cN_i} \right) = \sum_{\mathbf{a}_{\cN_i}} \mP\left( \mathbf{A}_{\cN_i} = \mathbf{a}_{\cN_i} \mid \mathbf{L}, \mathbf{A}_{-\cN_i} \right) 1\{T(\mathbf{a}_{\cN_i})=t\}~.
\]

To sharpen identification, we introduce additional structure through Assumption~\ref{ass:exposure}. Assumption~\ref{ass:exposure}(i) imposes a restricted interference condition, requiring that potential outcomes depend only on the exposure mapping. Assumption~\ref{ass:exposure}(ii) imposes a corresponding restriction on outcomes, stating that the conditional expectation depends only on the exposure level rather than the full assignment vector. 

\begin{assumption}[Model with Exposure Mapping]
\label{ass:exposure} 
\quad
\begin{enumerate}
    \item[(i)] For all $i \in [n]$ and $\mathbf{a}, \mathbf{a}^\prime \in \mathscr{A}(n)$,
    \begin{equation*}
        Y_i(\mathbf{a}) = Y_i(\mathbf{a}^\prime) \quad \text{w.p.1 if } T(\mathbf{a}_{\cN_i}) = T(\mathbf{a}^\prime_{\cN_i})~.
    \end{equation*}
    \item[(ii)] The observed outcome satisfies
    \begin{equation*}
        \mE[Y_i \mid \mathbf{Y}_{-i}, \mathbf{A}=\mathbf{a}, \mathbf{L}] = \mE[Y_i \mid \mathbf{Y}_{-i}, T_i=t, \mathbf{L}]~,
    \end{equation*}
    where $T_i = T(\mathbf{A}_{\cN_i})$ and $t = T(\mathbf{a}_{\cN_i})$.
\end{enumerate}
\end{assumption}

The next result shows that, under ignorability, either condition in Assumption~\ref{ass:exposure} is sufficient to identify the marginal potential outcome $\mE[Y_i(\mathbf{a})]$. This theorem parallels our earlier results and illustrates a double robustness property in the presence of exposure mappings.

\begin{theorem}\label{thm:exposure-2}
    Suppose Assumption \ref{ass:consistency}-\ref{ass:positivity} holds. Then, under either Assumption \ref{ass:exposure}(i) or Assumption \ref{ass:exposure}(ii), for any $K\geq 1$, we have
    \begin{align*}
        &\mE\left[  \frac{1\left\{ T_i = t \right\} \left(Y_i-\mE[Y_i \mid \mathbf{Y}_{-i}, \mathbf{A}_{\cN_i}=\mathbf{a}_{\cN_i}, \mathbf{A}_{-\cN_i}, \mathbf{L}]\right)}{\mP\left( T_i = t\mid \mathbf{L}, \mathbf{A}_{-\cN_i} \right)} + \mE[Y_i \mid \mathbf{Y}_{-i}, \mathbf{A}_{\cN_i}=\mathbf{a}_{\cN_i}, \mathbf{A}_{-\cN_i}, \mathbf{L}]\right]\\
        &=  \mE[Y_i(\mathbf{a})] ~.
    \end{align*}
\end{theorem}

Finally, the estimation strategies developed in Sections~\ref{subsec:main2} and~\ref{subsec:main3clt} can be directly applied in this setting, using the exposure mapping to construct estimators with desirable theoretical guarantees.

\section{Simulation Study}\label{sec:simulation}
We conducted an extensive simulation study to evaluate the performance of the proposed methods across networks of varying density and size. Specifically, we examined the double robustness property of the estimator, assessed its empirical performance relative to the outcome-model-based Auto-G-Computation method of \cite{auto-g}, and evaluated the validity of statistical inference based on the network HAC variance estimator. The parameters of interest include the four estimands listed in Table~\ref{table:estimands}.\footnote{In all simulation studies, we set $\alpha = 0.7$ and $\alpha^\prime = 0.3$.}

\begin{table}[ht!]
\centering
\setlength{\tabcolsep}{3.5pt}
\begin{tabular}{ll}
\toprule
Estimand & Formula \rule[2ex]{0pt}{-2ex} \\ \midrule
$\gamma(\alpha)$ & $n^{-1} \sum_{i=1}^n \sum_{\mathbf{a} \in \mathscr{A}(n)} \mE[Y_i(\mathbf{a})] \pi_i(\mathbf{a}; \alpha)$ \rule[-3.5ex]{0pt}{2ex} \\
$\DE(\alpha)$ & $n^{-1} \sum_{i=1}^n \sum_{\mathbf{a}_{-i} \in \mathscr{A}(n-1)} \left( \mE[Y_i(\mathbf{a}_{-i}, a_i = 1)] - \mE[Y_i(\mathbf{a}_{-i}, a_i = 0)] \right) \pi_i(\mathbf{a}_{-i}; \alpha)$ \rule[-3.5ex]{0pt}{2ex} \\
$\IE(\alpha)$ & $n^{-1} \sum_{i=1}^n \sum_{\mathbf{a}_{-i} \in \mathscr{A}(n-1)} \left( \mE[Y_i(\mathbf{a}_{-i}, a_i = 0)] - \mE[Y_i(\mathbf{a}_{-i} = \mathbf{0}, a_i = 0)] \right) \pi_i(\mathbf{a}_{-i}; \alpha)$ \rule[-3.5ex]{0pt}{2ex} \\
$\IE(\alpha, \alpha')$ & $n^{-1} \sum_{i=1}^n \sum_{\mathbf{a}_{-i} \in \mathscr{A}(n-1)}  \mE[Y_i(\mathbf{a}_{-i}, a_i = 0)] \pi_i(\mathbf{a}_{-i}; \alpha) - \mE[Y_i(\mathbf{a}_{-i}, a_i = 0)]  \pi_i(\mathbf{a}_{-i}; \alpha^\prime)$ \\
\bottomrule
\end{tabular}
\caption{Parameters of interest for simulation studies}
\label{table:estimands}
\end{table}

We generate networks of size 800 using a modified Barabási-Albert (BA) model with a cap on node degrees to prevent overly connected hubs. The construction begins with $m$ isolated nodes. Each new node is then sequentially connected to $m$ existing nodes, selected preferentially based on their degree (i.e., nodes with higher degrees are more likely to be chosen), but only among those whose degrees are below a specified \texttt{max\_degree} threshold. The parameter $m$ determines both the number of initial nodes and the number of edges each new node introduces. Consequently, $m$ has a direct effect on the \textit{mean degree} of the resulting network: since each new node contributes exactly $m$ edges and each edge increases the degree of two nodes, the expected average degree approaches approximately $2m$ as the network grows. Larger values of $m$ therefore yield denser networks with higher average degree, while the \texttt{max\_degree} constraint prevents the emergence of extreme hubs. This construction is well-suited for evaluating the empirical performance of our methods, which rely on weak dependence assumptions discussed in Section~\ref{subsec:main3clt}.

\begin{table}[htbp]
\small
\setlength{\tabcolsep}{5pt}
\renewcommand{\arraystretch}{0.75}
\centering
\begin{tabular}{
  l
  c 
  >{\centering\arraybackslash}p{1.3cm}
  >{\centering\arraybackslash}p{1.3cm}
  >{\centering\arraybackslash}p{1.3cm}
  c
  >{\centering\arraybackslash}p{1.3cm}
  >{\centering\arraybackslash}p{1.3cm}
  >{\centering\arraybackslash}p{1.3cm}
}
\toprule
& & \multicolumn{3}{c}{\footnotesize\textbf{Misspecified Outcome Model}} & & \multicolumn{3}{c}{\footnotesize\textbf{Misspecified Propensity Model}} \\
\cmidrule(lr){3-5} \cmidrule(lr){7-9}
($m$, \texttt{max\_degree}) & & (1,2) & (2,5) & (3,10) & & (1,2) & (2,5) & (3,10) \\
\midrule
\multicolumn{9}{l}{$\gamma(\alpha)$} \\
True              & & 0.702 & 0.703 & 0.649 & & 0.702 & 0.703 & 0.649 \\
AAIPW Bias         & & 0.011 & -0.001 & -0.005 & & 0.002 & 0.001 & -0.000 \\
AAIPW RMSE         & & 0.052 & 0.038 & 0.027 & & 0.024 & 0.017 & 0.012 \\
Auto-G Bias       & & 0.087 & 0.057 & 0.030 & & 0.003 & 0.001 & -0.000 \\
Auto-G RMSE       & & 0.089 & 0.059 & 0.032 & & 0.028 & 0.021 & 0.014 \\
\midrule
\multicolumn{9}{l}{$\DE(\alpha)$} \\
True              & & -0.230 & -0.311 & -0.402 & & -0.230 & -0.311 & -0.402 \\
AAIPW Bias         & & 0.016 & -0.005 & -0.019 & & 0.002 & 0.002 & 0.001 \\
AAIPW RMSE         & & 0.078 & 0.058 & 0.048 & & 0.036 & 0.027 & 0.020 \\
Auto-G Bias       & & 0.187 & 0.121 & 0.085 & & 0.002 & 0.000 & 0.003 \\
Auto-G RMSE       & & 0.189 & 0.123 & 0.088 & & 0.036 & 0.027 & 0.021 \\
\midrule
\multicolumn{9}{l}{$\IE(\alpha, \alpha')$} \\
True              & & 0.178 & 0.202 & 0.207 & & 0.178 & 0.202 & 0.207 \\
AAIPW Bias         & & 0.009 & 0.020 & 0.035 & & 0.004 & 0.004 & 0.004 \\
AAIPW RMSE         & & 0.038 & 0.059 & 0.099 & & 0.018 & 0.027 & 0.053 \\
Auto-G Bias       & & 0.074 & 0.203 & 0.189 & & 0.070 & 0.164 & 0.139 \\
Auto-G RMSE       & & 0.078 & 0.205 & 0.190 & & 0.076 & 0.166 & 0.141 \\
\midrule
\multicolumn{9}{l}{$\IE(\alpha)$} \\
True              & & 0.354 & 0.518 & 0.618 & & 0.354 & 0.518 & 0.618 \\
AAIPW Bias         & & 0.013 & 0.070 & 0.149 & & 0.001 & 0.006 & 0.003 \\
AAIPW RMSE         & & 0.081 & 0.210 & 0.274 & & 0.037 & 0.096 & 0.159 \\
Auto-G Bias       & & 0.093 & 0.164 & 0.178 & & -0.000 & 0.003 & -0.002 \\
Auto-G RMSE       & & 0.101 & 0.169 & 0.182 & & 0.041 & 0.060 & 0.090 \\
\bottomrule
\end{tabular}
\caption{Estimation accuracy of AAIPW and Auto-G across different causal parameters, network densities, and model misspecification scenarios. Results are averaged over 8,000 Monte Carlo samples generated via Gibbs sampling.}
\label{table:aipw_autog_all}
\end{table}

\begin{table}[htbp]
\small
\renewcommand{\arraystretch}{0.75}
\setlength{\tabcolsep}{5pt}
\centering
\label{tab:coverage_panels_reduced}
\begin{tabular}{lcccccccccc}
\toprule
($m$, \texttt{max\_degree})& (1,2) & (1,4) & (1,8) & (1,10) & (2,5) & (2,6) & (2,8) & (2,10) & (3,10) & (4,10) \\
\midrule
$\gamma(\alpha)$       & 0.908 & 0.966 & 0.972 & 0.975 & 0.958 & 0.970 & 0.973 & 0.982 & 0.991 & 0.994 \\
$\DE(\alpha)$        & 0.921 & 0.959 & 0.960 & 0.968 & 0.950 & 0.966 & 0.964 & 0.973 & 0.982 & 0.994 \\
$\IE(\alpha, \alpha')$   & 0.942 & 0.941 & 0.957 & 0.948 & 0.939 & 0.930 & 0.943 & 0.935 & 0.944 & 0.940 \\
$\IE(\alpha)$    & 0.948 & 0.931 & 0.942 & 0.932 & 0.927 & 0.923 & 0.920 & 0.904 & 0.807 & 0.464 \\
\bottomrule
\end{tabular}
\caption{Coverage probabilities of 95$\%$ confidence intervals for causal effect estimates across different network densities. Results are averaged over 8,000 Monte Carlo samples generated via Gibbs sampling.}
\label{table:coverage}
\end{table}

To investigate the double robustness property, we consider two scenarios: (i) only the propensity model is correctly specified; and (ii) only the outcome model is correctly specified. Results for the scenario in which both models are correctly specified are available in Appendix~\ref{app:extra_sim}, but are omitted here for brevity. In each misspecified model scenario, we adopt an adversarial setup in which the input variables to the misspecified model are entirely uninformative random noise. For the correctly specified model, we use input variables implied by the parametric Chain Graph model that governs the data generating process (DGP). Full details of the DGP are provided in Appendix~\ref{app:simulation-details}. In our setting, the variables $Y_i$, $A_i$, and $L_i$ are all binary, so logistic regression yields a correctly specified model for both the outcome and the propensity score. Estimation of the propensity score proceeds in two steps: we first fit a logistic regression to estimate individual-level propensity scores, and then use a closed-form expression derived from the Chain Graph model to compute the joint propensity of an individual’s treatment and their neighbors’ treatments. See Appendix~\ref{subsec:sim-estimator} for further details.

To benchmark the performance of our method, we compare it against the Auto-G-Computation estimator from \cite{auto-g} under all three model specification scenarios. The results are summarized in Table~\ref{table:aipw_autog_all}. In nearly all cases, the AAIPW estimator outperforms Auto-G in terms of both estimation bias and RMSE. In particular, when the outcome model is misspecified, Auto-G exhibits substantial bias, whereas the bias of the AAIPW estimator remains small. Furthermore, by examining the performance of AAIPW under both model misspecification scenarios, we demonstrate that the estimator is consistent under the union model, thereby validating its double robustness. One exception arises in the estimation of $\IE(\alpha)$ when the network is dense, where AAIPW performs relatively poorly. This is due to the high variance of the inverse propensity score weights when estimating $\mE[Y_i(\mathbf{a}_{-i} = \mathbf{0}, a_i = 0)]$. In dense networks, the probability that all units remain untreated can be extremely small, resulting in large inverse weights and, consequently, high variance in the estimator.

To evaluate the statistical inference based on Network HAC estimation (Remark~\ref{remark:hac-inference}), we report coverage rates under varying network densities in Table~\ref{table:coverage}. The coverage rates are generally close to 95\%, indicating that the proposed inference procedure performs well in most settings. However, there are a few challenging cases where coverage falls substantially below 95\%, notably for the parameter $\IE(\alpha)$ when $m = 4$ and $\text{max\_degree} = 10$. This deterioration arises from the high density of the network combined with the increased variance of the AAIPW estimator, as $\IE(\alpha)$ involves estimating $\mE[Y_i(\mathbf{0})]$.

\section{Data Application}\label{sec:empirical}
In this section, we illustrate the estimation and inference methods introduced in Section~\ref{sec:main-result} using data collected in the Networks, Norms, and HIV/STI Risk Among Youth (NNAHRAY) study. The NNAHRAY study was conducted in an impoverished neighborhood of Brooklyn, New York, from 2002 to 2005, in the context of an HIV epidemic and widespread drug use \citep{Friedman17}. Through in-person interviews, information was collected regarding the respondents’ demographic characteristics, incarceration history, sexual partnerships and histories, and past drug use. The study population we consider includes all interviewed persons for a total sample size of $n=465$ persons. Following \cite{auto-g}, we define network tie (i.e. edge) as a sexual and/or injection drug use partnership in the past three months if at least one of the partners reported the relationship. The network structure is given in Figure \ref{fig:network}.

\begin{figure}[h]
    \centering
    \includegraphics[width=0.9\linewidth]{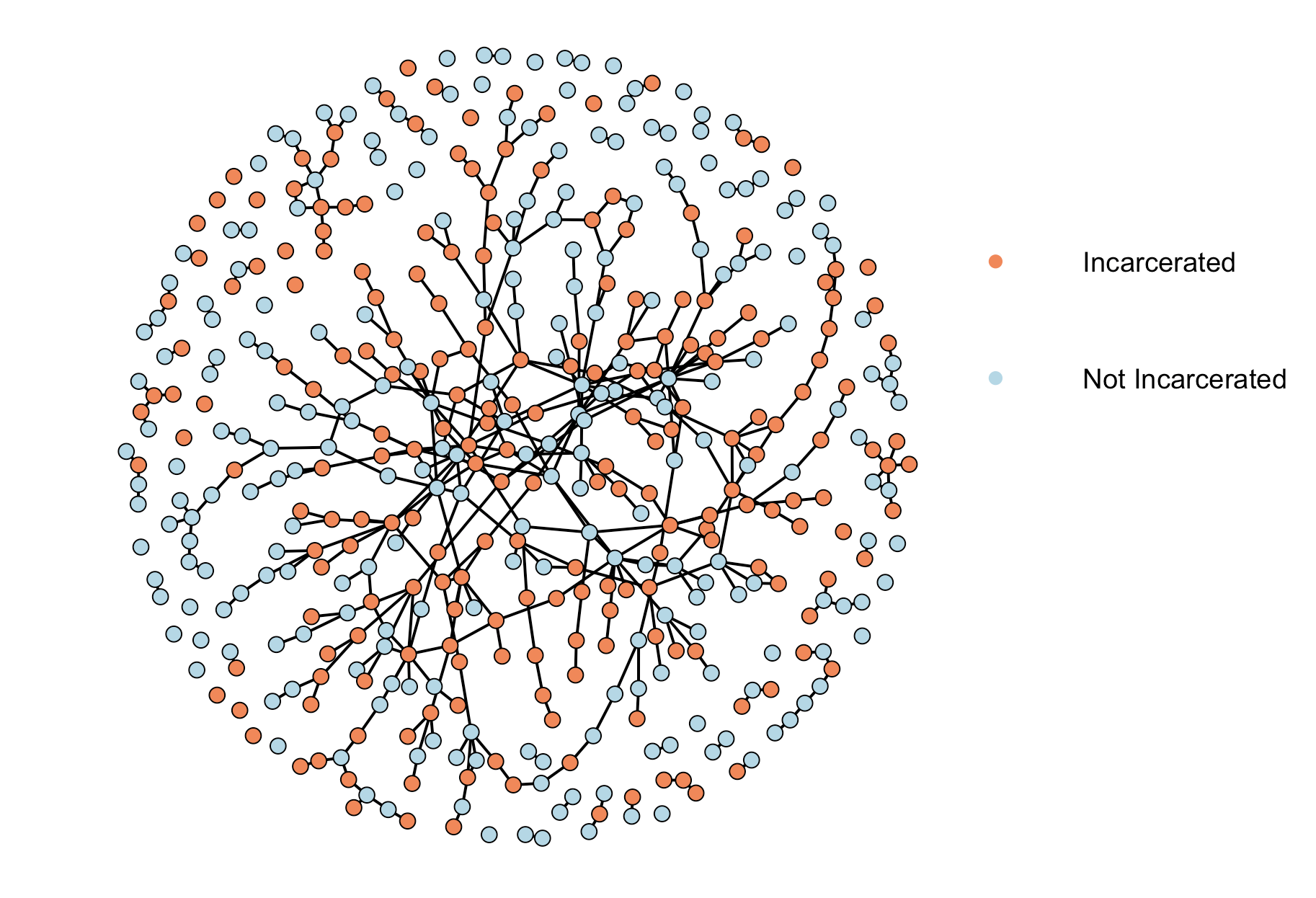}
    \caption{Network graph from the NNAHRAY data ($n=465$)}
    \label{fig:network}
\end{figure}

Our empirical study analyzes the impact of past incarceration on several socioeconomic outcomes of interviewed residents in the neighborhood, including employment, household income, and homelessness. We define past incarceration as any history of jail time reported by respondents. To account for confounding, we adjust for Latino/a ethnicity, age, education, and past illicit drug use. We employ the same models and estimation procedures described in the simulation section. The causal parameters of interest include the direct effect and spillover effects, $\DE(\alpha)$, $\IE(\alpha)$, and $\IE(\alpha, \alpha^\prime)$, as defined in Table~\ref{table:estimands}, with $\alpha = 0.5$ and $\alpha^\prime = 0.2$.

\begin{table}[ht]
\small
\renewcommand{\arraystretch}{0.75}
\label{table:empircal}
\centering
\begin{tabular}{lcccc}
\toprule
\textbf{Effect} & \multicolumn{2}{c}{\textbf{AAIPW}} & \multicolumn{2}{c}{\textbf{Auto-G}} \\
\cmidrule(lr){2-3} \cmidrule(lr){4-5} 
& Estimate & 95\% CI  & Estimate & 95\% CI \\
\midrule
\textbf{Employment} & & & &  \\
$\gamma(\alpha)$      & 0.188 & (0.134, 0.243) & 0.187 & (0.176, 0.225) \\
 $\DE(\alpha)$     & -0.159 & (-0.262, -0.055) & -0.100 & (-0.137, -0.066) \\
$\IE(\alpha, \alpha')$    & -0.071 & (-0.139, -0.002) & -0.030 & (-0.058, -0.008) \\
$\IE(\alpha)$   & -0.164 & (-0.301, -0.027) & -0.053 & (-0.101, -0.014) \\
\midrule
\textbf{Household Income} & & & & \\
$\gamma(\alpha)$      & 0.151 & (0.094, 0.208) & 0.134 & (0.113, 0.165) \\
 $\DE(\alpha)$      & 0.048 & (-0.042, 0.139) & 0.014 & (-0.007, 0.036) \\
$\IE(\alpha, \alpha')$    & 0.041 & (-0.008, 0.089) & 0.002 & (-0.008, 0.013) \\
$\IE(\alpha)$    & 0.043 & (-0.037, 0.123) & 0.003 & (-0.013, 0.021) \\
\midrule
\textbf{Homelessness} & & & & \\
$\gamma(\alpha)$     & 0.184 & (0.119, 0.249) & 0.125 & (0.107, 0.173) \\
 $\DE(\alpha)$      & 0.167 & (0.016, 0.318) & 0.058 & (0.036, 0.095) \\
$\IE(\alpha, \alpha')$   & -0.016 & (-0.061, 0.029) & 0.003 & (-0.005, 0.014) \\
$\IE(\alpha)$    & -0.076 & (-0.161, 0.010) & 0.006 & (-0.007, 0.023) \\
\bottomrule
\end{tabular}
\caption{Estimates of the causal parameters in Table~\ref{table:estimands} for employment, household income, and homelessness using the AAIPW and Auto-G methods. Inference for AAIPW is conducted via the Network HAC procedure described in Remark~\ref{remark:hac-inference}, whereas Auto-G inference is based on 1,000 bootstrap replications.}
\label{table:effects}
\end{table}

Table~\ref{table:effects} presents the empirical estimates for each outcome. We find strong evidence of negative direct and spillover effects of incarceration on employment, indicating that both a respondent's own incarceration and the incarceration of others in their network reduce their likelihood of employment. For household income, we do not find significant direct or spillover effects, suggesting that incarceration may not have a detectable impact on income levels in this setting, as the targeted study population was overall economically disadvantaged, irrespective of their history of incarceration. For homelessness, we observe a significant positive direct effect, implying that individuals with a history of incarceration are more likely to experience homelessness. However, we do not detect any spillover effects of incarceration on homelessness among connected individuals. The AAIPW and Auto-G estimators lead to qualitatively similar conclusions across all outcomes and estimands of interest. These patterns hold consistently across both estimation approaches, with the AAIPW and Auto-G estimators yielding qualitatively similar conclusions across all outcomes and estimands of interest.

\section{Conclusion}\label{sec:conclusion}
We have proposed a new framework for causal inference in network data by extending the Augmented Inverse Propensity Weighted (AIPW) estimator to accommodate both the dependence and interference induced by network structures. In i.i.d. data settings, AIPW estimators are known to achieve the semi-parametric efficiency bound. Extending this notion of semi-parametric efficiency to the network context—particularly when using nonparametric estimation techniques for the nuisance models—represents a promising avenue for strengthening the theoretical guarantees of our approach.

Additionally, while our focus here has been on a single large network, many empirical datasets contain such a network along with a large number of isolated units, as exemplified in the network structure shown in Figure~\ref{fig:network}. Developing strategies to leverage these isolated units—who are not subject to interference and may provide valuable information—could potentially improve both the robustness and efficiency of causal effect estimation in practice.  

Finally, throughout, we have made the key ignorability Assumption~\ref{ass:ignorability}, an assumption which may not hold in nonexperimental settings due to unmeasured confounding. It would be interesting to consider alternative identification strategies that do not require ignorability, such as either network versions of instrumental variable methods (\cite{BRAMOULLE2009, Goldsmith-Pinkham2013,Manski1993, omalley2014}; or difference-in-differences type approachs (\cite{egami2020}); or recent developments in proximal causal inference for network confounding or latent homophily (\cite{Egami2023}). 
Exploring these ideas in greater detail is an important direction for our future work.

\section*{Acknowledgement}
We are grateful to Dr. Samuel R. Friedman at National Development and Research Institutes, Inc. for access to the Networks, Norms, and HIV/STI Risk Among Youth study data.

The authors would like to acknowledge the National Institutes of Health (NIH) for their generous funding and support.

\clearpage
\newpage

\phantomsection\label{supplementary-material}
\bigskip

\begin{center}

{\large\bf SUPPLEMENTARY MATERIAL}

\end{center}

\appendix


\section{A Graphical Statistical Model for Network Data}\label{app:chain-graph}
\subsection{A Chain Graph Representation of Network Data}\label{subsec:chain-graph-intro}
This section introduces a graphical framework for modeling network data using chain graphs, as proposed by \cite{auto-g,shpitser2024}. A chain graph is a mixed graph containing both directed and undirected edges, with the restriction that the undirected edges cannot be oriented in a way that forms a directed cycle \citep{lauritzen1996graphical, lauritzen2002chain}. Chain graphs generalize both directed acyclic graphs (DAGs) and undirected graphs, and are particularly suited for settings involving both symmetric associations (e.g., network ties) and asymmetric causal mechanisms (e.g., treatment-outcome relationships).

We now describe how to construct the chain graph model \(\mathcal{G}_{\mathcal{E}_n}\) based on a given network structure. Recall that  \({\mathcal{E}_n}\) denote the set of undirected edges in the network, where \((i,j) \in {\mathcal{E}_n}\) if units \(i\) and \(j\) are directly connected, and for each unit \(i\), we observe covariates, treatment, and outcome, denoted by \(Y_i\), \(A_i\) and \(L_i\), respectively. In \(\mathcal{G}_{\mathcal{E}_n}\), we add \textit{undirected edges} between \(L_i\) and \(L_j\), \(A_i\) and \(A_j\), and \(Y_i\) and \(Y_j\) for each \((i,j) \in {\mathcal{E}_n}\), reflecting symmetric associations between neighboring units. We also include \textit{directed edges} to capture plausible causal pathways: specifically, for each \((i,j) \in {\mathcal{E}_n}\), we add
\[
L_i \to A_i, \quad L_i \to A_j, \quad L_i \to Y_i, \quad L_i \to Y_j, \quad A_i \to Y_i, \quad A_i \to Y_j~.
\]
This construction allows individual covariates to influence both own and neighbor treatments and outcomes, and allows for interference through treatment and covariate pathways.

The global Markov property of a chain graph $\cD$ asserts that whenever two sets $B_1$ and $B_2$ are $c-$separable given a set $S$. We have 
\begin{align*}
    B_1\independent B_2| S.
\end{align*}
The concept of $c$-separation for chain graphs coincides with standard $g$-separation in undirected graphs and with $d$-separation in DAGs \citep{studeny2013bayesian}. Two sets $B_1$ and $B_2$ are said to be $c$-separable given $S$ if every walk between $B_1$ and $B_2$ is blocked by $S$. See \cite{studeny2013bayesian} for details of this property, including the formal definitions of walks and blocking in chain graph models.

Applying the global Markov property to the chain graph $\mathcal{G}_{\mathcal{E}_n}$ constructed above yields the following intuitive conditional independence relationships:
\begin{align}\label{eq:markov-y}
Y_i &\independent \{Y_k, A_k, L_k\} \mid A_i, L_i, \{Y_j, A_j, L_j : j \in \mathcal{N}_n^\partial(i;1)\}, \quad \text{for all } k \notin \mathcal{N}_n(i;1)~,
\end{align}
and 
\begin{align}\label{eq:markov-a}
    \mathbf{A}_{\cN_n(i;\ell)} &\independent \{A_k, L_k\} \mid \left(\mathbf{L}_{\cN_n(i;\ell)}, \{A_j, L_j : j \in \cN_n^\partial(i;\ell+1)\}\right)
\end{align}
for all $i$ and $k\notin \cN_n(i;\ell+1)$ and $\ell \geq 0$.

\eqref{eq:markov-y} states that the outcome \(Y_i\) is conditionally independent of all variables associated with non-neighbors, given the covariates and treatments of \(i\) and its neighbors. Similarly, \eqref{eq:markov-a} implies that \(A_i\) is conditionally independent of distant covariates beyond distance $\ell +1$, once the corresponding information from all units within distance $\ell+1$ is conditioned on. These ``screening-off'' properties define the Markov blankets of \(Y_i\) and \(A_i\) in \(\mathcal{G}_{\mathcal{E}_n}\) and yield substantial dimension reduction in modeling the conditional outcome. In particular, the conditional density \(f(Y_i \mid \mathbf{Y}_{-i}, \mathbf{A}, \mathbf{L})\) depends only on data from local units, i.e. network ties, making estimation more tractable.

\subsection{Parametric Specifications of Auto-models}\label{subsec:chain-graph-parametric}
In this section, we describe the parametric auto-regression model, abbreviated as an auto-model, for the treatment vector \(\mathbf{A}\) conditional on the covariates \(\mathbf{L}\), based on the Chain Graph representation introduced above. This follows the approach of \cite{auto-g}, who focuses on modeling the outcome distribution \(f(\mathbf{Y} \mid \mathbf{A}, \mathbf{L})\); since the two specifications are structurally similar, we focus here on the propensity score model.

Under the chain graph model and assuming positivity, the conditional distribution of \(\mathbf{A}\) given \(\mathbf{L}\) admits the following factorization:
\begin{equation*}
    f(\mathbf{a} \mid \mathbf{l}) = \left(\frac{1}{\kappa(\mathbf{l})} \right)\exp\left\{U(\mathbf{a}; \mathbf{l})\right\}~,
\end{equation*}
where \(\kappa(\mathbf{l}) = \sum_{\mathbf{a}} \exp\left\{U(\mathbf{a}; \mathbf{l})\right\}\) is the normalizing constant, and \(U(\mathbf{a}; \mathbf{l})\) is a conditional energy function that aggregates contributions from a small set of local variables. Following the framework of \cite{Besag1974}, we assume the energy function takes the form:
\begin{equation*}
    U(\mathbf{a}; \mathbf{l}) = \sum_{i \in \mathcal{G}_{{\mathcal{E}_n}}} a_i G_i(a_i; \mathbf{l}) + \sum_{\{i,j\} \in {\mathcal{E}_n}} a_i a_{j} \eta_{ij}(\mathbf{l})~,
\end{equation*}
where \(G_i(a_i; \mathbf{l})\) and \(\eta_{ij}(\mathbf{l})\) are local functions and interaction coefficients, respectively.

We now specify parametric forms for these functions:
\begin{align*}
    G_i(a_i; \mathbf{l}) &= \widetilde{G}(\mathbf{l}_{\cN_n(i,1)}) = \iota_0 + \iota_1 l_i + \iota_2 \sum_{j \in \cN^\partial_n(i,1)} w_{ij}^l l_j~,\\
    \eta_{ij}(\mathbf{l}) &= \widetilde{\eta}_{ij} = w_{ij}^y \eta~,
\end{align*}
where \(w_{ij}^l\) and \(w_{ij}^y\) are user-specified weights encoding the importance of neighbor covariates and treatment interactions, respectively. 

Given this specification, the network propensity score for unit \(i\) is given by:
\begin{align}\label{eqn:chain-graph-propensity}
    &\mP(\mathbf{A}_{\cN_i} = \mathbf{a}_{\cN_i} \mid \mathbf{A}_{-\cN_i}, \mathbf{L}) = \frac{\mP(\mathbf{A}_{\cN_i} = \mathbf{a}_{\cN_i}, \mathbf{A}_{-\cN_i} \mid \mathbf{L})}{\sum_{\mathbf{a}^*} \mP(\mathbf{A}_{\cN_i} = \mathbf{a}^*, \mathbf{A}_{-\cN_i} \mid \mathbf{L})} \nonumber\\ 
    &= \frac{\exp\left\{\sum_{k \in \cN_i} a_k \widetilde{G}(\mathbf{L}_{\cN_n(k,1)}) + \sum_{k \in \cN_i} \sum_{j \in \cN_n(k,1)\cap \cN_i} a_k a_j \widetilde{\eta}_{kj}+\sum_{k \in \cN_i} \sum_{j \in \cN_n(k,1)\setminus \cN_i} a_k A_j \widetilde{\eta}_{kj}\right\}}{\sum_{\mathbf{a}^*} \exp\left\{\sum_{k \in \cN_i} a_k^* \widetilde{G}(\mathbf{L}_{\cN_n(k,1)}) + \sum_{k \in \cN_i} \sum_{j \in \cN_n(k,1)\cap \cN_i} a_k^* a_j^* \widetilde{\eta}_{kj}+\sum_{k \in \cN_i} \sum_{j \in \cN_n(k,1)\setminus \cN_i} a_k^* A_j \widetilde{\eta}_{kj}\right\}}~,
\end{align}

Recall that $\cN_i = \cN_{n}(i,K)$, \eqref{eqn:chain-graph-propensity} implies that the network propensity score for unit \(i\) depends only on treatment assignments of units in \(\cN_n^\partial(i,K+1)\) and covariates in \(\cN_n(i,K+1)\). Moreover, the model takes the form of a multinomial logit, which facilitates practical estimation. This local dependence, induced by the chain graph structure, enables tractable and scalable inference in high-dimensional networks.

\section{Simulation Details and Extra Results}\label{app:simulation-details}
\subsection{Data-Generating Process}
This section outlines the data-generating process (DGP) used in our simulation study. The DGP consists of three components: covariate generation, treatment assignment, and outcome modeling. We specify the full conditional distributions for each component, and samples are drawn via Gibbs sampling. A total of 10,000 iterations are generated, with the first 2,000 discarded as burn-in, yielding 8,000 simulated datasets. The detailed conditional distributions for each component are provided below.

\textbf{Covariate Generation}

In all simulation settings, binary covariates are generated using a chain graph model, with the conditional distributions for Gibbs sampling defined as follows:
\begin{align*}
    &\mP\left(L_{1, i}=1 \mid {L}_{- 1, i},\left\{{L}_{j}: j \in \mathcal{N}^-_i\right\}\right)\\
    &\hspace{3mm}=\operatorname{expit}\left(\tau_1+\rho_{12} L_{2, i}+\rho_{13} L_{3, i}+\nu_{11} \sum_{j \in \mathcal{N}^-_i} L_{1, j}+\nu_{12} \sum_{j \in \mathcal{N}^-_i} L_{2, j}+\nu_{13} \sum_{j \in \mathcal{N}^-_i}L_{3, j}\right) \\
    &\mP\left(L_{2, i}=1 \mid {L}_{- 2, i},\left\{{L}_{j}: j \in \mathcal{N}^-_i\right\}\right)\\
    &\hspace{3mm}=\operatorname{expit}\left(\tau_2+\rho_{12} L_{1, i}+\rho_{23} L_{3, i}+\nu_{21} \sum_{j \in \mathcal{N}^-_i} L_{1, j}+\nu_{22} \sum_{j \in \mathcal{N}^-_i} L_{2, j}+\nu_{23} \sum_{j \in \mathcal{N}^-_i}L_{3, j}\right) \\
    &\mP\left(L_{3, i}=1 \mid {L}_{- 3, i},\left\{{L}_{j}: j \in \mathcal{N}^-_i\right\}\right)\\
    &\hspace{3mm}=\operatorname{expit}\left(\tau_3+\rho_{13} L_{1, i}+\rho_{23} L_{3, i}+\nu_{31} \sum_{j \in \mathcal{N}^-_i} L_{1, j}+\nu_{32} \sum_{j \in \mathcal{N}^-_i} L_{2, j}+\nu_{33} \sum_{j \in \mathcal{N}^-_i}L_{3, j}\right) ~,
\end{align*}
where $\mathcal{N}^-_i = \cN_i\setminus\{i\} =\cN_n(i,K)\setminus\{i\} $ for some given distance $K$. Specific parameters used in simulation studies are reported in Table~\ref{tab:parameters}.

\textbf{Treatment Assignment}

As described in Section \ref{subsec:chain-graph-parametric}, conditioning on all covariates $\mathbf{L}$, treatments are generated by $f(\mathbf{a}|\mathbf{L}) \propto\exp(U(\mathbf{a};\mathbf{L}))$ with some energy function $U(\mathbf{a};\mathbf{L})$. In our simulation studies, the energy function is set to be
\begin{align*}
    U(\mathbf{a};\mathbf{L}) = \sum_i a_i G_i(\mathbf{L}) +\eta_7\sum_{d(i,j)=1}a_ia_j~,
\end{align*}
where
\begin{align*}
    G_i(\mathbf{L}) = \eta_0 + (\eta_1,\eta_3,\eta_5)' L_i + (\eta_2,\eta_4,\eta_6)' \sum_{j \in \cN^\partial_n(i,1)} L_j~.
\end{align*}
With this energy function, the conditional distribution of $A_i$ is given by
\begin{align*}
    \mP(A_i=1|\mathbf{L},\mathbf{A_{-i}}) &= \frac{\mP(A_i=1,\mathbf{A_{-i}}|\mathbf{L})}{\mP(A_i=0,\mathbf{A_{-i}}|\mathbf{L})+\mP(A_i=1,\mathbf{A_{-i}}|\mathbf{L})}\\
    &= \frac{\exp\{G_i(\mathbf{L})+\eta_7\sum_{j\in\mathcal N_i} A_j\}}{1+\exp\{G_i(\mathbf{L})+\eta_7\sum_{j\in\mathcal N_i} A_j\}}\\
    &=\operatorname{expit}\left(G_i(\mathbf{L})+\eta_7\sum_{j\in\mathcal N_i} A_j\right)~.
\end{align*}
Consequently,  $\mathbf{A}$ can be generated via the conditional distribution
\begin{align}\label{eqn:dgp-a}
\begin{aligned}
    &\mP\left(A_i=1 \mid L_i,\left\{A_j, {L}_j: j \in \mathcal{N}^-_i\right\}\right)\\&=\operatorname{expit}\bigg(\eta_0+\eta_1 L_{1, i}+\eta_2\sum_{j \in \mathcal{N}^-_i} L_{1, j}+\eta_3 L_{2, i}+\eta_4 \sum_{j \in \mathcal{N}^-_i} L_{2, j}+\eta_5 L_{3, i}+\eta_6 \sum_{j \in \mathcal{N}^-_i} L_{3, j}+\eta_7 \sum_{j \in \mathcal{N}^-_i} A_j\bigg) ~.
\end{aligned}
\end{align}
Specific parameters used in the simulation studies are reported in Table~\ref{tab:parameters}.

\textbf{Outcome Model}

Similar to the treatment assignment, the outcome model is given by the conditional distribution
\begin{align}\label{eqn:dgp-y}
\begin{aligned}
     &\mP\left(Y_i=1 \mid A_i, L_i, \left\{Y_j, A_j, {L}_j: j \in \mathcal{N}^-_i\right\}\right)
     \\=&\operatorname{expit}\bigg(\theta_0 +\theta_1 A_i+\theta_2 \sum_{j \in \mathcal{N}^-_i} A_j+\theta_3 L_{1, i}+\theta_4 \sum_{j \in \mathcal{N}^-_i} L_{1, j}  \\
    &+\theta_5 L_{2, i}+\theta_6 \sum_{j \in \mathcal{N}^-_i} L_{2, j}+\theta_7 L_{3, i}+\theta_8 \sum_{j \in \mathcal{N}^-_i} L_{3, j}+\theta_9 \sum_{j \in \mathcal{N}^-_i} Y_j\bigg) ~.    
\end{aligned}
\end{align}

The specific parameters used in simulation studies are given in Table~\ref{tab:parameters}.
\begin{table}[htbp]
\small
\renewcommand{\arraystretch}{0.75}
\setlength{\tabcolsep}{5pt}
\centering
\begin{tabular}{llllllllll}
\toprule
\textnormal{Parameter} & Value & \textnormal{Parameter} & Value & \textnormal{Parameter} & Value & \textnormal{Parameter} & Value & \textnormal{Parameter} & Value \\
\midrule
\multicolumn{10}{l}{\textbf{Panel A: Covariate generation parameters}}\\
$\tau_1$    & -1.0     & $\tau_2$    & 0.5     & $\tau_3$    & -0.50     & $\rho_{12}$ & 0.1      & $\rho_{13}$ & 0.2   \\ 
$\rho_{21}$ & 0.1      & $\rho_{23}$ & 0.1      &$\rho_{31}$ & 0.2      & $\rho_{32}$ & 0.1          & $\nu_{11}$  & 0.1  \\    $\nu_{12}$  & 0        & $\nu_{13}$  & 0       &$\nu_{21}$  & 0.1       & $\nu_{22}$  & 0     & $\nu_{23}$  & 0      \\    $\nu_{31}$  & 0.1      & $\nu_{32}$  & 0        & $\nu_{33}$  & 0     & & & & \\    
\midrule
\multicolumn{10}{l}{\textbf{Panel B: Treatment assignment parameters}}\\
$\eta_0$  & -1       & $\eta_1$  & 2     & $\eta_2$  & 0.1        & $\eta_3$  & -2       & $\eta_4$  & 0.1     \\ 
 $\eta_6$  & 2        &$\eta_7$  & 0.1      & $\eta_8$  & 0.1      & & & & \\
 \midrule
\multicolumn{10}{l}{\textbf{Panel C: Outcome model parameters}}\\

$\theta_1$   & $-m$ &   $\theta_2$   &  $-2m$ & $\theta_3$ & 2   &$\theta_4$   & 2   & $\theta_5$   & 0.1      \\
$\theta_6$   & -1    & $\theta_7$   & 0.1  & $\theta_8$   & 2        & $\theta_9$   & 0.1      & $\theta_{10}$& 0              \\
\bottomrule
\end{tabular}
\caption{Simulation parameters for covariate construction, treatment allocation, and outcome modeling.}\label{tab:parameters}
\end{table}

\subsection{Estimation Procedure}\label{subsec:sim-estimator}
For each realization of the DGP, we construct the estimated causal effect following \eqref{eqn:AIPW}. Specifically, the estimator is given by 
$
\widehat \mu_{\mathrm{AIPW}} = \frac{1}{n} \sum_{i=1}^n \widehat W_i,
$
where
\begin{align*}
    \widehat W_i = \frac{1\{ \mathbf{A}_{\cN_i} = \mathbf{a}_{\cN_i} \}}{\pi_{\mathbf{a}_{\cN_i}}(X_i^a; \widehat\eta)} \left(Y_i - \beta_{\mathbf{a}_{\cN_i}}(X_i^y; \widehat\theta)\right) + \beta_{\mathbf{a}_{\cN_i}}(X_i^y; \widehat\theta)~.
\end{align*}
Thus, evaluating the estimator requires two components, $\beta_{\mathbf{a}_{\cN_i}}(X_i^y; \widehat\theta)$ and $\pi_{\mathbf{a}_{\cN_i}}(X_i^a; \widehat\eta)$. To calculate them, we first obtain $\widehat \eta$ and $\widehat\theta$ via logistic regressions corresponding to \eqref{eqn:dgp-a} and \eqref{eqn:dgp-y}, respectively. With $\widehat\theta$, $\beta_{\mathbf{a}_{\cN_i}}(X_i^y; \widehat\theta)$ can be calculated directly. In contrast, evaluating the propensity score requires additional steps. Although \eqref{eqn:dgp-a} provides the full conditional probability, our propensity score considers the treatments within a neighborhood of distance $K$. Note that, given the parameters $\eta$, we can compute the propensity score for any neighborhood configuration $\mathbf{a}_{\cN_k}$ via
\begin{align}\label{eqn:simu-est1}
\begin{aligned}
&\mP(\bA_{\cN_k}=\mathbf{a}_{\cN_k}|\mathbf{L},\{A_j,j\notin \cN_k\}) = \frac{\mP(\bA_{\cN_k}=\mathbf{a}_{\cN_k},\{A_j,j\notin \cN_k\}|\mathbf{L})}{\sum_{\mathbf{a}^*}P(\bA_{\cN_k}=\mathbf{a}^*,\{A_j,j\notin \cN_k\}|\mathbf{L})
}\\
&=\frac{\exp\{\sum_{i\in \cN_k} a_iG_i(\mathbf{L})+\eta_7\sum_{i\in \cN_k}  \sum_{j\in \cN(i;1)\cap \cN_k}a_ia_j+\eta_7\sum_{i\in \cN_k}  \sum_{j\in \cN(i;1)\setminus \cN_k}a_iA_j\}}{\sum_{\mathbf{a}^*}\exp\{\sum_{i\in \cN_k} a_i^*G_i(\mathbf{L})+\eta_7\sum_{j\in \cN(i;1)\cap \cN_k}a_i^*a_j^*+\eta_7\sum_{i\in \cN_k} \sum_{j\in \cN(i;1)\setminus \cN_k}a^*_iA_j\}}~.    
\end{aligned}
\end{align}
To construct $\widehat W_i$, we only need to evaluate the propensity score for observed neighborhood treatment configurations,
\begin{align*}
&\mP(\bA_{\cN_k}=\mathbf{A}_{\cN_k}^{obs}|\mathbf{L},\{A_j^{obs},j\notin \cN_k\})~.
\end{align*}
Substituting the observed values $\mathbf{A}^{\mathrm{obs}}$ for $A_j$ and $a_i$ in \eqref{eqn:simu-est1}, together with $\mathbf{L}^{\mathrm{obs}}$ and the estimated parameter $\widehat\eta$ from Logistic regressions, yields the numerator. For the denominator, we substitute $A_j^{\mathrm{obs}}$ for $A_j$ in \eqref{eqn:simu-est1} and sum over all possible configurations of $a_i^*$. In practice, directly evaluating these exponentials in \eqref{eqn:simu-est1} might be computationally infeasible. In this case, the propensity score can be approximated using Gibbs sampling rather than the explicit formula in \eqref{eqn:simu-est1}. When adopting the exposure mapping $T_i = T(\mathbf{a}_{\mathcal{N}_i})$ introduced in Section~\ref{subsec:exposure}, we consider another exposure-propensity score, $\mP(T_i = t \mid \mathbf{L}, \mathbf{A}_{-\mathcal{N}_i})$, which can be computed as
\begin{equation*}
    \mP(T_i = t \mid \mathbf{L}, \mathbf{A}_{-\cN_i}) = \sum_{\mathbf{a}_{\cN_i}} \mP(\mathbf{A}_{\cN_i} = \mathbf{a}_{\cN_i} \mid \mathbf{L}, \mathbf{A}_{-\cN_i}) 1\{T(\mathbf{a}_{\cN_i}) = t\} ~.
\end{equation*}
With $\beta_{\mathbf{a}_{\mathcal{N}_i}}(X_i^y; \widehat\theta)$ and $\pi_{\mathbf{a}_{\mathcal{N}_i}}(X_i^a; \widehat\eta)$ in place, we can compute $\widehat W_i$ and subsequently construct $\widehat \mu_{\mathrm{AIPW}}$. Recall that $\widehat \mu_{\mathrm{AIPW}}$ provides an estimate of $\mE[Y_i(\mathbf{a})]$; using this quantity, the estimands listed in Table~\ref{table:estimands} can then be obtained.

\subsection{Extra Simulation Results}\label{app:extra_sim}
In this section, we present additional simulation results. We evaluate the performance of our proposed AIPW estimators under three scenarios: (i) both the propensity and outcome models are correctly specified; (ii) only the propensity model is correctly specified; and (iii) only the outcome model is correctly specified. Results for the Auto-G-Computation estimators are omitted, as the method yields identical results under Scenarios (i) and (iii), and Scenario (iii) has already been reported in Table~\ref{table:aipw_autog_all}.

\begin{table}[htbp]
\small
\renewcommand{\arraystretch}{0.75}
\centering
\begin{tabular}{
  lccccccccc
}
\toprule
 &
\multicolumn{3}{c}{\textbf{Scenario (i)}} &
\multicolumn{3}{c}{\textbf{Scenario (ii)}} &
\multicolumn{3}{c}{\textbf{Scenario (iii)}} \\
\cmidrule(lr){2-4} \cmidrule(lr){5-7} \cmidrule(lr){8-10}
($m$,\texttt{max\_degree}) & {(1,2)} & {(2,5)} & {(3,10)} & {(1,2)} & {(2,5)} & {(3,10)} & {(1,2)} & {(2,5)} & {(3,10)} \\
\midrule
\multicolumn{10}{l}{$\gamma(\alpha)$} \\
AAIPW Bias    & 0.002 & 0.001 & 0.000 & 0.011 & -0.001 & -0.005 & 0.002 & 0.001 & -0.000 \\
AAIPW RMSE    & 0.046 & 0.032 & 0.022 & 0.052 & 0.038 & 0.027 & 0.024 & 0.017 & 0.012 \\
\midrule
\multicolumn{10}{l}{ $\DE(\alpha)$ } \\
AAIPW Bias    & 0.002 & 0.002 & 0.001 & 0.016 & -0.005 & -0.019 & 0.002 & 0.002 & 0.001 \\
AAIPW RMSE    & 0.069 & 0.048 & 0.034 & 0.078 & 0.058 & 0.048 & 0.036 & 0.027 & 0.020 \\
\midrule
\multicolumn{10}{l}{$\IE(\alpha, \alpha')$} \\
AAIPW Bias    & 0.004 & 0.003 & 0.004 & 0.009 & 0.020 & 0.035 & 0.004 & 0.004 & 0.004 \\
AAIPW RMSE    & 0.031 & 0.045 & 0.079 & 0.038 & 0.059 & 0.099 & 0.018 & 0.027 & 0.053 \\
\midrule
\multicolumn{10}{l}{$\IE(\alpha)$ } \\
AAIPW Bias    & 0.001 & 0.006 & 0.006 & 0.013 & 0.070 & 0.149 & 0.001 & 0.006 & 0.003 \\
AAIPW RMSE    & 0.064 & 0.159 & 0.211 & 0.081 & 0.210 & 0.274 & 0.037 & 0.096 & 0.159 \\
\bottomrule
\end{tabular}
\caption{Estimation accuracy of AAIPW across different causal parameters, network densities, and scenarios. Results are averaged over 8,000 Monte Carlo samples generated via Gibbs sampling.}
\label{table:aipw_all}
\end{table}
An interesting observation is that our AAIPW estimator performs better under Scenario (iii) than under Scenario (i). In Scenario (iii), the inputs to the propensity score model consist largely of random noise, resulting in estimated coefficients from logistic regression that are close to zero. This, in turn, leads to more stable inverse propensity score weights with smaller variance. Since the AAIPW estimator remains consistent under both scenarios, a reduction in variance translates directly into a lower RMSE. However, it is important to note that using an uninformative propensity model can lead to poor performance if the outcome model is also misspecified. In this sense, the double robustness property involves a trade-off, as the lower variance under Scenario (iii) may come at the cost of increased sensitivity when both models are imperfect.

\section{Additional Theoretical Result}\label{app:extra-identification}
\subsection{Regularization Assumptions}
In this section, we present the regularity assumptions required to establish our asymptotic results.
\begin{assumption}\label{ass:regularization}
Following regularization assumptions hold:
\begin{enumerate}[label=(\roman*)]
    \item $\beta_{\mathbf{a}_{\cN_i}}(\cdot;\theta)$ and $\pi_{\mathbf{a}_{\cN_i}}(\cdot;\eta)$ are differentiable with respect to $\theta$ and $\eta$, respectively, with probability 1. Moreover, $\beta_{\mathbf{a}_{\cN_i}}(\cdot;\theta)$, $\pi_{\mathbf{a}_{\cN_i}}(\cdot;\eta)$, and their derivatives $\partial \beta_{\mathbf{a}_{\cN_i}}(\cdot;\theta)/\partial \theta$ and $\partial \pi_{\mathbf{a}_{\cN_i}}(\cdot;\eta)/\partial \eta$ are Lipschitz continuous.
    \item There exists some constant $c_1$ and $c_2$, such that for any unit $i$, $c_1\le\pi_{\mathbf{a}_{\cN_i}}(X_i^a;\eta)\le c_2$ with probability 1.

    \item The following limitations exist,
\begin{align*}
   &M_1: =\lim_{n\to\infty}\frac{1}{n}\sum_{i=1}^n\mathrm{\mE}\left[ H_{1,i}\right]~,\quad M_2: =\lim_{n\to\infty}\frac{1}{n}\sum_{i=1}^n\mathrm{\mE}\left[H_{2,i}\right]~,
\end{align*}
where $H_{1,i}$ and $H_{2,i}$ is given by
\begin{align}\label{eqn:H12-define}
\begin{aligned}
    &H_{1,i}: =   \left(1 - \frac{1\{ \mathbf{A}_{\cN_i} = \mathbf{a}_{\cN_i}\} }{\pi_{\mathbf{a}_{\cN_i}}(X_i^a;\eta)} \right) \frac{\partial \beta_{\mathbf{a}_{\cN_i}}(X_i^y;\theta)}{\partial \theta}\\
    &H_{2,i}:= \frac{1\{ \mathbf{A}_{\cN_i} = \mathbf{a}_{\cN_i}\} }{\left(\pi_{\mathbf{a}_{\cN_i}}(X_i^a;\eta)\right)^2} \frac{\partial \pi_{\mathbf{a}_{\cN_i}}(X_i^a;\eta)}{\partial \eta} \left(Y_i-\beta_{\mathbf{a}_{\cN_i}}(X_i^y;\theta)\right)
\end{aligned}
\end{align}

\item In the case that $\widehat\theta$ and $\widehat\eta$ are obtained by solving sample moment equations, $\frac{1}{n}\sum_{i=1}^n g^y(Y_i,X_i^y;\widehat\theta) = 0$ and $\frac{1}{n}\sum_{i=1}^n g^a(Y_i,X_i^a;\widehat\eta) = 0$, the following limitations exist
\begin{align*}
    &M_3:=\lim_{n\to\infty}\frac{1}{n}\sum_{i=1}^n\mE\left[\frac{\partial g^y(Y_i,X_i^y;\theta)}{\partial\theta}\right]\\
     &M_4: = \lim_{n\to\infty}\frac{1}{n}\sum_{i=1}^n\mE\left[\frac{\partial g^a(Y_i,X_i^a;\eta)}{\partial\eta}\right]~.
\end{align*}

 \item For any unit $i$, the fourth moments of $H_{1,i}$, $H_{2,i}$, $\partial g^y(Y_i,X_i^y;\theta)/\partial\theta$ and $\partial g^a(Y_i,X_i^a;\eta)/\partial\eta$ exist. 
 
\item Let 
\[
Z_i := (Y_j, A_j, L_j : j \in \mathcal{N}_n(i, K+1)).
\]  
Consider $H_{1,i}$, $H_{2,i}$, $\partial g^y(Y_i,X_i^y;\theta)/\partial\theta$, and $\partial g^a(Y_i,X_i^a;\eta)/\partial\eta$ as measurable functions of $Z_i$. We assume that each of these functions is \emph{locally Lipschitz} in $Z_i$, in the sense that there exists a constant $C_n = n^q$ for some $q>0$ and a function $D$ such that, for all $Z', Z''$ in $[-n, n]$,
\[
|f(Z') - f(Z'')| \le C_n \, D(Z' - Z''), 
\quad f \in \{ H_{1,i}, H_{2,i}, \partial g^y/\partial\theta, \partial g^a/\partial\eta \}.
\]

\end{enumerate}
\end{assumption}

\subsection{Additional Identification Result}
In this section, we introduce the following weaker form of network conditional ignorability and derive a new identification result based on it, as discussed in footnote \ref{fn:weak-identification}.
\begin{assumption}[Network Conditional Ignorability II]\label{ass:ignorability2}
\begin{equation*}
    \mathbf{A} \independent Y_i(\mathbf{a}) \mid \mathbf{L} \quad \text{for all } \mathbf{a} \in \mathscr{A}(n)~.
\end{equation*}
\end{assumption}
Compared to Assumption~\ref{ass:ignorability}, for a given unit $i$, Assumption~\ref{ass:ignorability2} requires conditional independence only between $\mathbf{A}$ and $Y_i(\mathbf{a})$, without imposing restrictions on the outcomes of other units. Despite this weaker condition, it still suffices to identify $\mE\left[ Y_i(\mathbf{a}_{\cN_i}, \mathbf{A}_{-\cN_i}) \right]$ for unit $i$. The following theorem establishes this identification result, relying on an alternative NAIPW formulation with an outcome regression that excludes $\mathbf{Y}_{-i}$. The proof is provided in Section~\ref{app:proofs}.
\begin{theorem}\label{thm:identification_extra}
    Under Assumption \ref{ass:consistency}, \ref{ass:positivity} and \ref{ass:ignorability2}, for any fixed distance $K\ge 1$ we have:
    \begin{equation}\label{eqn:identification1-2}
    \begin{aligned}
        &\mE\left[  \frac{1\left\{ \mathbf{A}_{\cN_i} = \mathbf{a}_{\cN_i} \right\} \left(Y_i - \mE[Y_i \mid \mathbf{A}_{\cN_i}=\mathbf{a}_{\cN_i}, \mathbf{A}_{-\cN_i}, \mathbf{L}]\right)}{\mP\left( \mathbf{A}_{\cN_i} = \mathbf{a}_{\cN_i} \mid \mathbf{L}, \mathbf{A}_{-\cN_i} \right)} + \mE[Y_i \mid \mathbf{A}_{\cN_i}=\mathbf{a}_{\cN_i}, \mathbf{A}_{-\cN_i}, \mathbf{L}] \right] \\
        &= \mE\left[ Y_i(\mathbf{a}_{\cN_i}, \mathbf{A}_{-\cN_i}) \right]~,
    \end{aligned}
    \end{equation}
    if either $\mP\left( \mathbf{A}_{\cN_i} = \mathbf{a}_{\cN_i} \mid \mathbf{L}, \mathbf{A}_{-\cN_i} \right)$ or $\mE[Y_i \mid \mathbf{A}_{\cN_i}=\mathbf{a}_{\cN_i}, \mathbf{A}_{-\cN_i}, \mathbf{L}]$ is correctly specified.
\end{theorem}
\subsection{Convergence of Parameter Estimators}
As discussed in Section \ref{subsec:main2}, in this section, we present a proposition demonstrating that the GMM-based estimator satisfies Assumption \ref{ass:estimator_consistent} under the weak dependence condition and the regularity assumptions. Formally, we have
\begin{proposition}\label{prop:gmm-consistency}
Under Assumption \ref{ass:network_weak_dependence} and \ref{ass:regularization}, $\widehat\theta \overset{a.s.}{\longrightarrow} \theta$ and $\widehat\eta \overset{a.s.}{\longrightarrow} \eta$.
\end{proposition}

\section{Mathematical Proofs}\label{app:proofs}
\subsection{Proof of Theorem \ref{thm:identification1}}
\begin{proof}
Throughout this proof, we use $\widetilde \mE[Y_i \mid \mathbf{Y}_{-i}, \mathbf{A}_{\cN_i}=\mathbf{a}_{\cN_i}, \mathbf{A}_{-\cN_i}, \mathbf{L}]$ and $\widetilde\mP(\mathbf{A}_{\cN_i} = \mathbf{a}_{\cN_i} \mid \mathbf{L}, \mathbf{A}_{-\cN_i})$ to denote the (possibly misspecified) conditional outcome and the propensity score. With this notation, the left-hand side of equation~\eqref{eqn:identification1} can be decomposed into three components:
\begin{align}
\label{eqn:thm1_0}
\begin{aligned}
\text{LHS of \eqref{eqn:identification1}} = &\mE[\widetilde\mE[Y_i \mid \mathbf{Y}_{-i}, \mathbf{A}_{\cN_i}=\mathbf{a}_{\cN_i}, \mathbf{A}_{-\cN_i}, \mathbf{L}]]  + \mE\left[ \frac{1\{\mathbf{A}_{\cN_i} = \mathbf{a}_{\cN_i}\} Y_i}{\widetilde\mP(\mathbf{A}_{\cN_i} = \mathbf{a}_{\cN_i} \mid \mathbf{L}, \mathbf{A}_{-\cN_i})} \right] \\
&-\mE\left[ \frac{1\{\mathbf{A}_{\cN_i} = \mathbf{a}_{\cN_i}\} \widetilde\mE[Y_i \mid \mathbf{Y}_{-i}, \mathbf{A}_{\cN_i}=\mathbf{a}_{\cN_i}, \mathbf{A}_{-\cN_i}, \mathbf{L}]}{\widetilde\mP(\mathbf{A}_{\cN_i} = \mathbf{a}_{\cN_i} \mid \mathbf{L}, \mathbf{A}_{-\cN_i})} \right] \\
= &: \cD_1 +  \cD_2 -  \cD_3.
\end{aligned}
\end{align}
We next consider two cases separately: either the conditional outcome model or the propensity score is correctly specified. In each case, we show that \eqref{eqn:thm1_0} equals to $\mE[Y_i(\mathbf{a}_{\cN_i}, \mathbf{A}_{-\cN_i})]$, demonstrating that the identification result holds if at least one of the two models is correctly specified.
\subsubsection*{Case 1: $\widetilde\mE[Y_i \mid \mathbf{Y}_{-i}, \mathbf{A}_{\cN_i}=\mathbf{a}_{\cN_i}, \mathbf{A}_{-\cN_i}, \mathbf{L}] = \mE[Y_i \mid \mathbf{Y}_{-i}, \mathbf{A}_{\cN_i}=\mathbf{a}_{\cN_i}, \mathbf{A}_{-\cN_i}, \mathbf{L}]~.$}

Given that the conditional expectation of $Y_i$ is correctly specified in this case, we directly have $\cD_1 = \mE\big[\mE[Y_i(\mathbf{a}_{\cN_i}, \mathbf{A}_{-\cN_i}) \mid \mathbf{Y}_{-i}, \mathbf{A}_{\cN_i}=\mathbf{a}_{\cN_i}, \mathbf{A}_{-\cN_i}, \mathbf{L}]\big]$. In addition, Lemma \ref{lemma:independent}(ii) implies that the conditional distributions $Y_i(\mathbf{a}_{\cN_i}, \mathbf{A}_{-\cN_i}) \mid \mathbf{Y}_{-i}, \mathbf{A}_{\cN_i}=\mathbf{a}_{\cN_i}, \mathbf{A}_{-\cN_i}, \mathbf{L}$ and $Y_i(\mathbf{a}_{\cN_i}, \mathbf{A}_{-\cN_i}) \mid \mathbf{Y}_{-i}(\mathbf{a}_{\cN_i}, \mathbf{A}_{-\cN_i}),\mathbf{A}_{-\cN_i}, \mathbf{L}$ coincide. Consequently, $\cD_1$ can be written as
\begin{align}\label{eqn:thm1_1}
\begin{aligned}
\cD_1 &= \mE\big[\mE[Y_i(\mathbf{a}_{\cN_i}, \mathbf{A}_{-\cN_i}) \mid \mathbf{Y}_{-i}(\mathbf{a}_{\cN_i}, \mathbf{A}_{-\cN_i}), \mathbf{A}_{-\cN_i}, \mathbf{L}]\big] = \mE[Y_i(\mathbf{a}_{\cN_i}, \mathbf{A}_{-\cN_i})].
\end{aligned}
\end{align}
We then consider $\cD_2$ and $\cD_3$. It turns out that in Case 1, we have
\begin{align*}
     \cD_2  &= \mE\left[\mE\left[\frac{1\{\mathbf{A}_{\cN_i} = \mathbf{a}_{\cN_i} \}Y_i(\mathbf{a}_{\cN_i}, \mathbf{A}_{-\cN_i})}{\widetilde\mP(\mathbf{A}_{\cN_i} = \mathbf{a}_{\cN_i} \mid \mathbf{L}, \mathbf{A}_{-\cN_i} ) } \mid \mathbf{Y}_{-i}, \mathbf{A}_{\cN_i}=\mathbf{a}_{\cN_i}, \mathbf{A}_{-\cN_i}, \mathbf{L}\right]\right]\\
    &= \mE\left[\frac{ 1\{\mathbf{A}_{\cN_i} = \mathbf{a}_{\cN_i} \} \mE[Y_i(\mathbf{a}_{\cN_i}, \mathbf{A}_{-\cN_i}) \mid \mathbf{Y}_{-i}, \mathbf{A}_{\cN_i}=\mathbf{a}_{\cN_i}, \mathbf{A}_{-\cN_i}, \mathbf{L}]}{\widetilde\mP(\mathbf{A}_{\cN_i} = \mathbf{a}_{\cN_i} \mid \mathbf{L}, \mathbf{A}_{-\cN_i} ) }  \right] \hspace{6mm}  \\
    &=\mE\left[ \frac{1\{\mathbf{A}_{\cN_i} = \mathbf{a}_{\cN_i}\} \mE[Y_i \mid \mathbf{Y}_{-i}, \mathbf{A}_{\cN_i}=\mathbf{a}_{\cN_i}, \mathbf{A}_{-\cN_i}, \mathbf{L}]}{\widetilde\mP(\mathbf{A}_{\cN_i} = \mathbf{a}_{\cN_i} \mid \mathbf{L}, \mathbf{A}_{-\cN_i})} \right] = \cD_3~.
\end{align*}
Therefore, $\cD_2 = \cD_3$ in Case 1. Combing this equation with \eqref{eqn:thm1_1}, we conclude that \eqref{eqn:thm1_0} equals $\mE[Y_i(\mathbf{a}_{\cN_i}, \mathbf{A}_{-\cN_i})]$ in Case 1.

\subsubsection*{Case 2: $\widetilde\mP(\mathbf{A}_{\cN_i} = \mathbf{a}_{\cN_i} \mid \mathbf{L}, \mathbf{A}_{-\cN_i}) = \mP(\mathbf{A}_{\cN_i} = \mathbf{a}_{\cN_i} \mid \mathbf{L}, \mathbf{A}_{-\cN_i})~.$}
With Assumption \ref{ass:consistency}$, \cD_2$ can be written as
    \begin{align}\label{eqn:thm1_4}
    \begin{aligned}
        \cD_2 =& \mE\left[\mE\left[  \frac{1\{\mathbf{A}_{\cN_i} = \mathbf{a}_{\cN_i} \} Y_i}{\mP(\mathbf{A}_{\cN_i} = \mathbf{a}_{\cN_i} \mid \mathbf{L}, \mathbf{A}_{-\cN_i} )}   \mid \mathbf{L},\mathbf{A}_{-\cN_i} \right] \right]\\
        =&\mE\left[  \frac{ \mE\left[1\{\mathbf{A}_{\cN_i} = \mathbf{a}_{\cN_i} \} Y_i \mid \mathbf{L},\mathbf{A}_{-\cN_i}\right]}{\mP(\mathbf{A}_{\cN_i} = \mathbf{a}_{\cN_i} \mid \mathbf{L}, \mathbf{A}_{-\cN_i} )}  \right]\\
        =& \mE\left[  \frac{ \mE\left[1\{\mathbf{A}_{\cN_i} = \mathbf{a}_{\cN_i} \} Y_i(\mathbf{a}_{\cN_i}, \mathbf{A}_{-\cN_i}) \mid \mathbf{L},\mathbf{A}_{-\cN_i}\right]}{\mP(\mathbf{A}_{\cN_i} = \mathbf{a}_{\cN_i} \mid \mathbf{L}, \mathbf{A}_{-\cN_i} )}  \right]~.
    \end{aligned}
    \end{align}
    Using Lemma \ref{lemma:independent}(i), we have $\mathbf{Y}(\mathbf{a}_{\cN_i}, \mathbf{a}_{-\cN_i}) \independent \mathbf{A}_{\cN_i} \mid \mathbf{A}_{-\cN_i}, \mathbf{L}$. Consequently, 
    \begin{align} \label{eqn:thm1_6}
        \begin{aligned}
            \cD_2 =\mE\left[  \frac{ \mE\left[1\{\mathbf{A}_{\cN_i} = \mathbf{a}_{\cN_i} \} Y_i(\mathbf{a}_{\cN_i}, \mathbf{A}_{-\cN_i}) \mid \mathbf{L},\mathbf{A}_{-\cN_i}\right]}{\mP(\mathbf{A}_{\cN_i} = \mathbf{a}_{\cN_i} \mid \mathbf{L}, \mathbf{A}_{-\cN_i} )}  \right] &= \mE\left[\mE\left[ Y_i(\mathbf{a}_{\cN_i}, \mathbf{A}_{-\cN_i}) \mid \mathbf{L},\mathbf{A}_{-\cN_i}\right] \right] \\
            &= \mE\left[ Y_i(\mathbf{a}_{\cN_i}, \mathbf{A}_{-\cN_i})\right] ~.
        \end{aligned}
    \end{align}
For $\cD_3$, Lemma \ref{lemma:independent}(i) implies that $\widetilde\mE[Y_i \mid \mathbf{Y}_{-i}(\mathbf{a}_{\cN_i}, \mathbf{A}_{-\cN_i}),\mathbf{A}_{\cN_i} = \mathbf{a}_{\cN_i}, \mathbf{A}_{-\cN_i},\mathbf{L}]$ and $\mathbf{A}_{\cN_i}$ are independent conditioning on $\mathbf{L}$ and $\mathbf{A}_{-\cN_i}$. Therefore,
    \begin{align}
    \label{eqn:thm1_8}
    \begin{aligned}
        \cD_3=& \mE\left[\mE\left[  \frac{1\{\mathbf{A}_{\cN_i} = \mathbf{a}_{\cN_i} \} \widetilde\mE[Y_i \mid \mathbf{Y}_{-i}(\mathbf{a}_{\cN_i}, \mathbf{A}_{-\cN_i}), \mathbf{A}_{\cN_i} = \mathbf{a}_{\cN_i}, \mathbf{A}_{-\cN_i},\mathbf{L}]}{\mP(\mathbf{A}_{\cN_i} = \mathbf{a}_{\cN_i} \mid \mathbf{L}, \mathbf{A}_{-\cN_i} )}  \mid \mathbf{L}, \mathbf{A}_{-\cN_i} \right]\right]\\
        =&\mE\left[\frac{\mE[\widetilde\mE[Y_i \mid \mathbf{Y}_{-i}(\mathbf{a}_{\cN_i}, \mathbf{A}_{-\cN_i}), \mathbf{A}_{\cN_i} = \mathbf{a}_{\cN_i}, \mathbf{A}_{-\cN_i},\mathbf{L}]\mid\mathbf{A}_{-\cN_i},\mathbf{L}]}{\mP(\mathbf{A}_{\cN_i} = \mathbf{a}_{\cN_i} \mid \mathbf{L}, \mathbf{A}_{-\cN_i} ) }\mE\left[  1\{\mathbf{A}_{\cN_i} = \mathbf{a}_{\cN_i} \}  \mid \mathbf{L}, \mathbf{A}_{-\cN_i} \right]\right]\\
     =&\mE\left[\mE[\widetilde\mE[Y_i \mid \mathbf{Y}_{-i}(\mathbf{a}_{\cN_i}, \mathbf{A}_{-\cN_i}), \mathbf{A}_{\cN_i} = \mathbf{a}_{\cN_i}, \mathbf{A}_{-\cN_i},\mathbf{L}]\mid\mathbf{A}_{-\cN_i},\mathbf{L}]\right] \\
     =& \mE[\widetilde\mE[Y_i \mid \mathbf{Y}_{-i}, \mathbf{A}_{\cN_i}=\mathbf{a}_{\cN_i}, \mathbf{A}_{-\cN_i}, \mathbf{L}]] = \cD_1~.
    \end{aligned}
    \end{align}
Combing \eqref{eqn:thm1_6} and \eqref{eqn:thm1_8}, we conclude that \eqref{eqn:thm1_0} equals $\mE[Y_i(\mathbf{a}_{\cN_i}, \mathbf{A}_{-\cN_i})]$ in Case 2.
\end{proof}
\subsection{Proof of Proposition \ref{prop:models} and  Proposition \ref{thm:identification1-markov}}
\begin{proof}
    This proposition could be proved by Markov properties in Assumption \ref{ass:markov}. Specifically, Assumption \ref{ass:markov}(i) gives that for all $i$ and  $k \notin \cN_n(i;1)$, we have
        \begin{align*}
            Y_i \independent \{Y_k, A_k, L_k\} \mid \left(A_i, L_i, \{Y_j, A_j, L_j : j \in \cN_n^\partial(i;1)\}\right)~.
        \end{align*}
    Therefore, we immediately have
    \begin{equation*}
        \mE[Y_i \mid \mathbf{Y}_{-i}, \mathbf{A} = \mathbf{a}, \mathbf{L}] = \mE[Y_i \mid \mathbf{A}_{\cN_i} = \mathbf{a}_{\cN_i}, L_i, \{Y_j, L_j : j \in \cN^\partial_n(i;1)\}]~.
    \end{equation*}

Similarly,  Assumption \ref{ass:markov}(ii) gives 
\begin{align*}
    \mathbf{A}_{\cN_i} &\independent \{A_k, L_k\} \mid \{L_j : j \in \cN_n(i, K+1)\}, \{A_j : j \in \cN_n^\partial(i; K+1)\}  \text{ for all } i \text{ and } k \notin \cN_n(i;K+1)~,
\end{align*}
which immediately leads to
 \begin{equation*}
        \mP\left(\mathbf{A}_{\cN_i} = \mathbf{a}_{\cN_i} \mid \mathbf{L}, \mathbf{A}_{-\cN_i}\right) = \mP(\mathbf{A}_{\cN_i} = \mathbf{a}_{\cN_i} \mid \{L_j : j \in \cN_n(i, K+1)\}, \{A_j : j \in \cN_n^\partial(i; K+1)\})~.
    \end{equation*}
This completes the proof of Proposition \ref{prop:models}. Together with Theorem \ref{thm:identification1}, we also have Proposition \ref{thm:identification1-markov} holds.

\end{proof}

\subsection{Proof of Theorem \ref{thm:identification2} and Proposition \ref{thm:identification2-markov}}
\begin{proof}
In view of Theorem~\ref{thm:identification1}, the proof of Theorem~\ref{thm:identification2} reduces to verifying that
\begin{align}\label{eqn:thm2_1}
    \mE\left[ Y_i(\mathbf{a}_{\cN_i}, \mathbf{A}_{-\cN_i})   \right] =\mE[Y_i(\mathbf{a})]~.
\end{align} 
We next consider two cases separetely: either the the neighborhood interference condition stated in Assumption~\ref{ass:interference} holds or the Markov property in Assumption~\ref{ass:markov}(i) holds.
\subsubsection*{Case 1: Assumption~\ref{ass:interference} holds.}
Under the neighborhood interference condition stated in Assumption~\ref{ass:interference}, $Y_i(\mathbf{a}_{\cN_i}, \mathbf{a}^*_{-\cN_i}) = Y_i(\mathbf{a})$ for any $\mathbf{a}^*_{-\cN_i}$. Consequently, \eqref{eqn:thm2_1} follows directly, as
\begin{align*}
    \mE\left[ Y_i(\mathbf{a}_{\cN_i}, \mathbf{A}_{-\cN_i})   \right] &= \sum_{\mathbf{a}^*_{-\cN_i}}\mE\left[Y_i(\mathbf{a}_{\cN_i},\mathbf{a}^*_{-\cN_i})1 \{\mathbf{A}_{-\cN_i}=\mathbf{a}^*_{-\cN_i}\}\right]\\&= \sum_{\mathbf{a}^*_{-\cN_i}}\mE\left[Y_i(\mathbf{a})1 \{\mathbf{A}_{-\cN_i}=\mathbf{a}^*_{-\cN_i}\}\right]=\mE[Y_i(\mathbf{a})]~,
\end{align*}
which verifies \eqref{eqn:thm2_1}.
\subsubsection*{Case 2: Assumption~\ref{ass:markov}(i) holds.}
From the proof of Theorem~\ref{thm:identification1}, specifically equation \eqref{eqn:thm1_1}, we have
\begin{align*}
    \mE\left[ Y_i(\mathbf{a}_{\cN_i}, \mathbf{A}_{-\cN_i})   \right] &= \mE\left[\mE[Y_i \mid \mathbf{Y}_{-i}, \mathbf{A}_{\cN_i}=\mathbf{a}_{\cN_i}, \mathbf{A}_{-\cN_i}, \mathbf{L}]  \right]~. 
\end{align*}
Assumption~\ref{ass:markov}(i) implies that the conditioning set can be reduced, yielding
\begin{align*}
    \mE\left[\mE[Y_i \mid \mathbf{Y}_{-i}, \mathbf{A}_{\cN_i}=\mathbf{a}_{\cN_i}, \mathbf{A}_{-\cN_i}, \mathbf{L}]  \right] =\mE\left[    \mE[Y_i \mid \mathbf{Y}_{-i}, \mathbf{A}=\mathbf{a}, \mathbf{L}] \right]~.  
\end{align*}
By Lemma~\ref{lemma:independent}(iii), the conditional distributions $Y_i \mid \mathbf{Y}_{-i}, \mathbf{A}=\mathbf{a}, \mathbf{L}$ coincides with that of $Y_i(\mathbf{a}) \mid \mathbf{Y}_{-i}(\mathbf{a}), \mathbf{L}$. Consequently,
\begin{equation} \label{eq:app-proof-thm4-1}
    \begin{aligned}
        \mE\left[ Y_i(\mathbf{a}_{\cN_i}, \mathbf{A}_{-\cN_i})   \right] 
        &= \mE\left[    \mE[Y_i(\mathbf{a}) \mid \mathbf{Y}_{-i}(\mathbf{a}), \mathbf{L}] \right]  
        = \mE[Y_i(\mathbf{a})]~.
    \end{aligned}
\end{equation}
This establishes \eqref{eqn:thm2_1} and completes the proof of Theorem~\ref{thm:identification2}. Combining this result with Proposition~\ref{thm:identification1-markov} yields Proposition~\ref{thm:identification2-markov}.
\end{proof}
\subsection{Proof of Proposition \ref{prop:gmm-consistency}}\label{app:proof-gmm-consistency}
\begin{proof}
      We consider the setting where the outcome and propensity score model estimators are obtained via GMM. The  moment conditions used are $\frac{1}{n}\sum_{i=1}^n g^y(Y_i, X_i^y;\widehat\theta)=0$ and $\frac{1}{n}\sum_{i=1}^n g^a(Y_i, X_i^a;\widehat\eta)=0$. In this case, the estimator $\widehat\theta$ satisfies
    \begin{align*}
        \sqrt{n}(\widehat{\theta}-\theta)=-\left[\frac{1}{n}\sum_{i=1}^n\frac{\partial g^y(Y_i,X_i^y;\theta)}{\partial\theta}\right]^{-1}\frac{1}{\sqrt{n}}\sum_{i=1}^ng^y(Y_i,X_i^y;\theta)+o_p(1).
    \end{align*}
    Lemma \ref{lemma:M-consistency} shows that, with Assumption \ref{ass:network_weak_dependence} and \ref{ass:regularization}, we have 
    \begin{align}\label{eqn:m3}
        \frac{1}{n}\sum_{i=1}^n\frac{\partial g^y(Y_i,X_i^y;\theta)}{\partial\theta} \overset{a.s.}{\longrightarrow} \lim_{n\to\infty}\frac{1}{n}\sum_{i=1}^n\mE\left[\frac{\partial g^y(Y_i,X_i^y;\theta)}{\partial\theta}\right]: = M_3~.
    \end{align}
    Similar equation could be derived for $\widehat\eta$:
    \begin{align}\label{eqn:m4}
        \frac{1}{n}\sum_{i=1}^n\frac{\partial g^a(Y_i,X_i^a;\eta)}{\partial\eta} \overset{a.s.}{\longrightarrow} \lim_{n\to\infty}\frac{1}{n}\sum_{i=1}^n\mE\left[\frac{\partial g^a(Y_i,X_i^a;\eta)}{\partial\eta}\right]: = M_4~.
    \end{align}
Consequently, we have
    \begin{align}
        \sqrt{n}(\widehat \theta - \theta) = - M_3^{-1}  \frac{1}{\sqrt{n}}\sum_{i=1}^n g^y(Y_i, X_i^y;\theta) + o_p(1) ~,
    \end{align}
  and
    \begin{align}
        \sqrt{n}(\widehat \eta - \eta) = -M_4^{-1} \frac{1}{\sqrt{n}}\sum_{i=1}^n g^a(Y_i, X_i^a;\theta) + o_p(1) ~,
    \end{align}
     Assumptions~\ref{ass:network_weak_dependence} and~\ref{ass:regularization} guarantee the network weak dependence of $g^y(Y_i, X_i^y;\theta)$ and $g^a(Y_i, X_i^a;\theta)$. Following \citet{kojevnikov2021limit}, we have the consistency result directly.
\end{proof}

\subsection{Proof of Theorem \ref{thm:clt}}\label{app:proof-clt}
\begin{proof}
The AAIPW estimator can be written as 
\begin{align}
    \begin{aligned}\label{clt:eqn1}
        &\sqrt{n}\left(\widehat \mu_{\mathrm{AIPW}}-\mu_n \right) \\=& \frac{1}{\sqrt{n}} \sum_{i=1}^n \frac{1\{ \mathbf{A}_{\cN_i}
        = \mathbf{a}_{\cN_i}\} }{\pi_{\mathbf{a}_{\cN_i}}(X_i^a;\widehat\eta)} \left(Y_i-\beta_{\mathbf{a}_{\cN_i}}(X_i^y;\widehat\theta)\right) +\beta_{\mathbf{a}_{\cN_i}}(X_i^y;\widehat\theta) -  \mu_i ~,
    \end{aligned}
\end{align}  
using a first-order Taylor approximation around $\theta$ and $\eta$ of the propensity and outcome models, we have
    \begin{align}
    \begin{aligned}\label{clt:eqn2}
         &\sqrt{n}\left(\widehat \mu_{\mathrm{AIPW}}-\mu_n \right) \\=& \frac{1}{\sqrt{n}}\sum_{i=1}^n \widetilde W_{i} +  \left(\frac{1}{n}\sum_{i=1}^n H_{1,i}\right)\cdot \sqrt{n}(\widehat\theta - \theta)  - \left(\frac{1}{n}\sum_{i=1}^n H_{2,i}\right)\cdot \sqrt{n}(\widehat\eta - \eta) + o_p(1)~, 
    \end{aligned}
    \end{align}
    where $\widetilde W_{i}$ is given by
\begin{align*}
     \widetilde W_{i} &=  \frac{1\{ \mathbf{A}_{\cN_i} = \mathbf{a}_{\cN_i}\} }{\pi_{\mathbf{a}_{\cN_i}}(X_i^a;\eta)} \left(Y_i-\beta_{\mathbf{a}_{\cN_i}}(X_i^y;\theta)\right) +\beta_{\mathbf{a}_{\cN_i}}(X_i^y;\theta) -  \mu_i~.
\end{align*}
$H_{1,i}$ and $H_{2,i}$ is given by \eqref{eqn:H12-define}. Using Lemma \ref{lemma:M-consistency}, \eqref{clt:eqn2} can be further written as
\begin{align}\label{eqn:clt_eqn3}
\begin{aligned}
     &\sqrt{n}\left(\widehat \mu_{\mathrm{AIPW}}-\mu_n \right) = \frac{1}{\sqrt{n}} \sum_{i=1}^n \widetilde W_{i} - M_1\sqrt{n}(\widehat \eta - \eta) + M_2\sqrt{n}(\widehat \theta - \theta) + o_p(1) ~.
\end{aligned}
\end{align}
    
   Consider the setting where the outcome and propensity score model estimators are obtained via GMM first. The  moment conditions are $\frac{1}{n}\sum_{i=1}^n g^y(Y_i, X_i^y;\widehat\theta)=0$ and $\frac{1}{n}\sum_{i=1}^n g^a(Y_i, X_i^a;\widehat\eta)=0$. As shown in the proof of  Proposition \ref{prop:gmm-consistency}, we have
    \begin{align}\label{eqn:clt_eqn5}
        \sqrt{n}(\widehat \theta - \theta) = - M_3^{-1}  \frac{1}{\sqrt{n}}\sum_{i=1}^n g^y(Y_i, X_i^y;\theta) + o_p(1) ~.
    \end{align}
    and
    \begin{align}\label{eqn:clt_eqn6}
        \sqrt{n}(\widehat \eta - \eta) = -M_4^{-1} \frac{1}{\sqrt{n}}\sum_{i=1}^n g^a(Y_i, X_i^a;\theta) + o_p(1) ~,
    \end{align}
    where $M_3$ and $M_4$ are defined as in \eqref{eqn:m3} and  \eqref{eqn:m4}.
    
    Combing \eqref{eqn:clt_eqn3}, \eqref{eqn:clt_eqn5}, and \eqref{eqn:clt_eqn6}, we have
    \begin{align*}
        &\sqrt{n}\left(\widehat \mu_{\mathrm{AIPW}}-\mu_n \right) \\
        &= \frac{1}{\sqrt{n}} \sum_{i=1}^n \Bigg[\frac{1\{ \mathbf{A}_{\cN_i} = \mathbf{a}_{\cN_i}\} }{\pi_{\mathbf{a}_{\cN_i}}(X_i^a;\eta)} \left(Y_i-\beta_{\mathbf{a}_{\cN_i}}(X_i^y;\theta)\right) +\beta_{\mathbf{a}_{\cN_i}}(X_i^y;\theta) -  \mu_i \\
        &\quad - M_1 \mathrm{IF}_{ \eta}(\mathbf{A}_{\cN_i},X_i^a) + M_2 \mathrm{IF}_{\theta}(Y_i, X_i^y)  \Bigg]+ o_p(1) = \frac{1}{\sqrt{n}}\sum_{i=1}^n  W_i + o_p(1) ~,
    \end{align*}
    where
    \begin{align*}
    \mathrm{IF}_{\theta}(Y_i, X_i^y) &= -M_3^{-1} g^y(Y_i, X_i^y;\theta)\\
    \mathrm{IF}_{\eta}(\mathbf{A}_{\cN_i},X_i^a) &= -M_4^{-1}  g^a(\mathbf{A}_{\cN_i}, X_i^a;\eta)~.
    \end{align*}
 Assumptions~\ref{ass:network_weak_dependence} and~\ref{ass:regularization} guarantee the network weak dependence of $W_i$. Following \citet{kojevnikov2021limit}, we then obtain the desired CLT in Theorem \ref{thm:clt}. In the general case where $\widehat\theta$ and $\widehat\eta$ are estimated using methods other than GMM, $\mathrm{IF}_{\theta}$ and $\mathrm{IF}_{\eta}$ should be replaced by their corresponding influence functions.
\end{proof}
\subsection{Proof of Theorem \ref{thm:exposure-1}}
\begin{proof}
As in the proof of Theorem~\ref{thm:exposure-1}, we decompose the left-hand side of \eqref{eqn:exposure-1} into three parts and analyze each separately. Sepcifically, we write 
 \begin{align}\label{eqn:proof-exposure-1}
 \begin{aligned}
        \text{LHS of \eqref{eqn:exposure-1}}=  &\mE\left[Y_i \mid \mathbf{Y}_{-i}, \mathbf{A}_{\cN_i}=\mathbf{a}_{\cN_i}, \mathbf{A}_{-\cN_i}, \mathbf{L}]\right] +\mE\left[  \frac{1\left\{ T_i = t \right\} Y_i}{\mP\left( T_i = t\mid \mathbf{L}, \mathbf{A}_{-\cN_i} \right)}\right] \\&- \mE\left[  \frac{1\left\{ T_i = t \right\} \mE[Y_i \mid \mathbf{Y}_{-i}, \mathbf{A}_{\cN_i}=\mathbf{a}_{\cN_i}, \mathbf{A}_{-\cN_i}, \mathbf{L}]}{\mP\left( T_i = t\mid \mathbf{L}, \mathbf{A}_{-\cN_i} \right)}\right]\\
       =& \breve \cD_1 + \breve \cD_2 - \breve \cD_3~.    
 \end{aligned}
    \end{align}
By using \eqref{eqn:thm1_1}, we have $\breve \cD_1 = \cD_1 = \mE[Y_i(\mathbf{a}_{\cN_i}, \mathbf{A}_{-\cN_i})]$ immediately. For the second term, it satisfies    
\begin{align*}
    \begin{aligned}
        \breve\cD_2=&\mE\left[  \frac{ \mE\left[1\{ T_i = t \} Y_i \mid \mathbf{L},\mathbf{A}_{-\cN_i}\right]}{\mP(T_i = t \mid \mathbf{L}, \mathbf{A}_{-\cN_i} )}  \right]\\
        =&\mE\left[  \frac{ \mE[1\{ T_i = t \} \sum_{\mathbf{a}} Y_i(\mathbf{a}) 1\{\mathbf{A}=\mathbf{a}\}\mid\mathbf{L},\mathbf{A}_{-\cN_i}]}{\mP(T_i=t \mid \mathbf{L}, \mathbf{A}_{-\cN_i} )}  \right] \\
        =&\mE\left[\sum_{\mathbf{a}_{-\cN_i}} \sum_{\mathbf{a}_{\cN_i}} 1\{\mathbf{A}_{-\cN_i}=\mathbf{a}_{-\cN_i}\}  \frac{ \mE[1\{ T_i = t \}  Y_i(\mathbf{a}) 1\{\mathbf{A}_{\cN_i}=\mathbf{a}_{\cN_i}\}\mid\mathbf{L},\mathbf{A}_{-\cN_i}]}{\mP(T_i=t \mid \mathbf{L}, \mathbf{A}_{-\cN_i} )}  \right]\\
        =&\mE\left[\sum_{\mathbf{a}_{-\cN_i}} \sum_{\mathbf{a}_{\cN_i}} 1\{\mathbf{A}_{-\cN_i}=\mathbf{a}_{-\cN_i}\}  \frac{ \mP(T_i = t,\mathbf{A}_{\cN_i}=\mathbf{a}_{\cN_i})\mid\mathbf{L},\mathbf{A}_{-\cN_i})}{\mP(T_i=t \mid \mathbf{L}, \mathbf{A}_{-\cN_i} )} \mE[ Y_i(\mathbf{a})\mid\mathbf{L},\mathbf{A}_{-\cN_i}] \right] \\
        =&\mE\left[\sum_{\mathbf{a}_{\cN_i}}\mP(\mathbf{A}_{\cN_i}=\mathbf{a}_{\cN_i}\mid T_i = t,\mathbf{L},\mathbf{A}_{-\cN_i}) \sum_{\mathbf{a}_{-\cN_i}}  \mE[  1\{\mathbf{A}_{-\cN_i}=\mathbf{a}_{-\cN_i}\} Y_i(\mathbf{a})\mid\mathbf{L},\mathbf{A}_{-\cN_i}] \right]\\
        =& \mE\left[\sum_{\mathbf{a}_{\cN_i}}  \mP(\mathbf{A}_{\cN_i}=\mathbf{a}_{\cN_i}\mid T_i = t,\mathbf{L},\mathbf{A}_{-\cN_i}) \sum_{\mathbf{a}_{-\cN_i}}   1\{\mathbf{A}_{-\cN_i}=\mathbf{a}_{-\cN_i}\} Y_i(\mathbf{a}_{\cN_i}, \mathbf{a}_{-\cN_i}) \right] \\
        =& \mE\left[\sum_{\mathbf{a}_{\cN_i}} \mP(\mathbf{A}_{\cN_i}=\mathbf{a}_{\cN_i}\mid T_i = t,\mathbf{L},\mathbf{A}_{-\cN_i})  Y_i(\mathbf{a}_{\cN_i}, \mathbf{A}_{-\cN_i}) \right]  ~.
    \end{aligned}
\end{align*}
The third term can be written as 
    \begin{align*}
    \begin{aligned}
       \breve \cD_3=& \mE\left[  \frac{1\{ T_i = t \} \mE[Y_i(\mathbf{a}_{\cN_i}, \mathbf{A}_{-\cN_i}) \mid \mathbf{Y}_{-i}(\mathbf{a}_{\cN_i}, \mathbf{A}_{-\cN_i}), \mathbf{A}_{-\cN_i},\mathbf{L}]}{\mP(T_i = t \mid \mathbf{L}, \mathbf{A}_{-\cN_i} )} \right]\\
        =& \mE\left[\mE\left[  \frac{1\{ T_i = t \} \mE[Y_i(\mathbf{a}_{\cN_i}, \mathbf{A}_{-\cN_i}) \mid \mathbf{Y}_{-i}(\mathbf{a}_{\cN_i}, \mathbf{A}_{-\cN_i}), \mathbf{A}_{-\cN_i},\mathbf{L}]}{\mP(T_i = t \mid \mathbf{L}, \mathbf{A}_{-\cN_i} )}   \mid \mathbf{L},\mathbf{A} \right] \right]\\
        =&\mE\left[  \frac{1\{ T_i = t \} \mE[\mE[Y_i(\mathbf{a}_{\cN_i}, \mathbf{A}_{-\cN_i}) \mid \mathbf{Y}_{-i}(\mathbf{a}_{\cN_i}, \mathbf{A}_{-\cN_i}), \mathbf{A}_{-\cN_i},\mathbf{L}]\mid\mathbf{L},\mathbf{A}]}{\mP(T_i = t \mid \mathbf{L}, \mathbf{A}_{-\cN_i} )}  \right] \\
        =&\mE\left[  \frac{1\{ T_i = t \} \mE[\mE[Y_i(\mathbf{a}_{\cN_i}, \mathbf{A}_{-\cN_i}) \mid \mathbf{Y}_{-i}(\mathbf{a}_{\cN_i}, \mathbf{A}_{-\cN_i}), \mathbf{A}_{-\cN_i},\mathbf{L}]\mid\mathbf{A}_{-\cN_i},\mathbf{L}]}{\mP(T_i = t \mid \mathbf{L}, \mathbf{A}_{-\cN_i} )}  \right]\\
        =& \mE\left[\mE\left[  \frac{1\{ T_i = t \} \mE[Y_i(\mathbf{a}_{\cN_i}, \mathbf{A}_{-\cN_i}) \mid \mathbf{A}_{-\cN_i},\mathbf{L}]}{\mP(T_i = t \mid \mathbf{L}, \mathbf{A}_{-\cN_i} )}  \mid \mathbf{L}, \mathbf{A}_{-\cN_i} \right]\right]\\
        =&\mE\left[\frac{\mE[Y_i(\mathbf{a}_{\cN_i}, \mathbf{A}_{-\cN_i}) \mid  \mathbf{A}_{-\cN_i},\mathbf{L}]}{\mP(T_i = t \mid \mathbf{L}, \mathbf{A}_{-\cN_i} ) }\mE\left[  1\{ T_i = t \}  \mid \mathbf{L}, \mathbf{A}_{-\cN_i} \right]\right]\\
        =& \mE[\mE[Y_i(\mathbf{a}_{\cN_i}, \mathbf{A}_{-\cN_i}) \mid \mathbf{A}_{-\cN_i},\mathbf{L}]]  = \mE[Y_i(\mathbf{a}_{\cN_i}, \mathbf{A}_{-\cN_i})]~.
         \end{aligned}
    \end{align*}
Substituting the results of $\breve\cD_1$, $\breve\cD_2$, and $\breve\cD_3$ into \eqref{eqn:proof-exposure-1} completes the proof of Theorem~\ref{thm:exposure-1}.

\end{proof}
\subsection{Proof of Theorem \ref{thm:exposure-2}}
\begin{proof}
In light of Theorem~\ref{thm:exposure-1}, the proof of Theorem~\ref{thm:exposure-2} reduces to verifying that
    \begin{align*}
        \mE\left[\sum_{\mathbf{a}_{\cN_i}} \mP(\mathbf{A}_{\cN_i}=\mathbf{a}_{\cN_i}\mid T_i = t,\mathbf{L},\mathbf{A}_{-\cN_i})  Y_i(\mathbf{a}_{\cN_i}, \mathbf{A}_{-\cN_i}) \right] = \mE[Y_i(\mathbf{a})]~.
    \end{align*}
We consider two separate cases and show that this equality holds under either Assumption~\ref{ass:exposure}(i) or (ii), confirming the double robustness property.
\subsubsection*{Case 1: Assumption~\ref{ass:exposure}(i) holds.}
It is straightforward to see the equality holds under Assumption \ref{ass:exposure}(i).
\subsubsection*{Case 2: Assumption~\ref{ass:exposure}(ii) holds.}
    Note that
    \begin{align*}
         &\mE\left[\sum_{\mathbf{a}_{\cN_i}} \mP(\mathbf{A}_{\cN_i}=\mathbf{a}_{\cN_i}\mid T_i = t,\mathbf{L},\mathbf{A}_{-\cN_i})  Y_i(\mathbf{a}_{\cN_i}, \mathbf{A}_{-\cN_i}) \right] \\
         =& \mE\left[\sum_{\mathbf{a}_{\cN_i}} \mP(\mathbf{A}_{\cN_i}=\mathbf{a}_{\cN_i}\mid T_i = t,\mathbf{L},\mathbf{A}_{-\cN_i})  \mE\left[Y_i(\mathbf{a}_{\cN_i}, \mathbf{A}_{-\cN_i}) \mid T_i = t,\mathbf{L},\mathbf{A}_{-\cN_i} \right] \right] \\
         =&\mE\left[\sum_{\mathbf{a}_{\cN_i}}   \mE\left[1\{\mathbf{A}_{\cN_i} = \mathbf{a}_{\cN_i}\} Y_i(\mathbf{a}_{\cN_i}, \mathbf{A}_{-\cN_i}) \mid T_i = t,\mathbf{L},\mathbf{A}_{-\cN_i} \right] \right] \\
         =& \mE\left[ \mE\left[ Y_i\mid T_i = t,\mathbf{L},\mathbf{A}_{-\cN_i} \right] \right] = \mE\left[ \mE\left[ Y_i\mid \mathbf{Y}_{-i}, T_i = t,\mathbf{L} \right] \right] \\
         =& \mE\left[ \mE[Y_i \mid \mathbf{Y}_{-i}, \mathbf{A}=\mathbf{a}, \mathbf{L}]\right] = \mE[Y_i(\mathbf{a})] ~. 
    \end{align*}
    The second equality holds under
    \begin{equation}\label{eq:app-cond-indep-t}
        \mathbf{Y}(\mathbf{a}_{\cN_i}, \mathbf{a}_{-\cN_i}) \independent \mathbf{A}_{\cN_i} \mid T_i,  \mathbf{A}_{-\cN_i}, \mathbf{L} ~.
    \end{equation}
    And the second last equality holds under Assumption \ref{ass:exposure}(ii). 
\end{proof}

\subsection{Proof of Theorem \ref{thm:identification_extra}}
\begin{proof}
    From Lemma \ref{lemma:independent}(i), we have
    \begin{equation}\label{eq:app-cond-indep2}
        Y_i(\mathbf{a}_{\cN_i}, \mathbf{a}_{-\cN_i}) \independent \mathbf{A}_{\cN_i} \mid \mathbf{A}_{-\cN_i}, \mathbf{L} ~,
    \end{equation}
    Consequently, 
    \begin{align*}
        \mE\left[ \mE[Y_i \mid \mathbf{A}_{\cN_i}=\mathbf{a}_{\cN_i}, \mathbf{A}_{-\cN_i}, \mathbf{L}] \right] &= \mE\left[ \mE[Y_i(\mathbf{a}_{\cN_i}, \mathbf{A}_{-\cN_i}) \mid \mathbf{A}_{\cN_i}=\mathbf{a}_{\cN_i}, \mathbf{A}_{-\cN_i}, \mathbf{L}] \right]  \\
        &= \mE\left[ \mE[Y_i(\mathbf{a}_{\cN_i}, \mathbf{A}_{-\cN_i}) \mid \mathbf{A}_{-\cN_i}, \mathbf{L}] \right] \\
        &= \mE[Y_i(\mathbf{a}_{\cN_i}, \mathbf{A}_{-\cN_i})] ~.
    \end{align*}
    Then,
    \begin{align*}
        \mE\left[  \frac{1\left\{ \mathbf{A}_{\cN_i} = \mathbf{a}_{\cN_i} \right\} Y_i}{\mP\left( \mathbf{A}_{\cN_i} = \mathbf{a}_{\cN_i} \mid \mathbf{L}, \mathbf{A}_{-\cN_i} \right)}  \right] &= \mE\left[ \mE\left[  \frac{1\left\{ \mathbf{A}_{\cN_i} = \mathbf{a}_{\cN_i} \right\} Y_i}{\mP\left( \mathbf{A}_{\cN_i} = \mathbf{a}_{\cN_i} \mid \mathbf{L}, \mathbf{A}_{-\cN_i} \right)} \mid \mathbf{L}, \mathbf{A}_{-\cN_i} \right] \right] \\
        &= \mE\left[   \frac{\mE\left[1\left\{ \mathbf{A}_{\cN_i} = \mathbf{a}_{\cN_i} \right\} Y_i(\mathbf{a}_{\cN_i},  \mathbf{A}_{-\cN_i} )\mid \mathbf{L}, \mathbf{A}_{-\cN_i} \right]  }{\mP\left( \mathbf{A}_{\cN_i} = \mathbf{a}_{\cN_i} \mid \mathbf{L}, \mathbf{A}_{-\cN_i} \right)}  \right] \\
        &=  \mE\left[  \mE\left[ Y_i(\mathbf{a}_{\cN_i},  \mathbf{A}_{-\cN_i} )\mid \mathbf{L}, \mathbf{A}_{-\cN_i} \right]   \right] \hspace{3mm} \text{(by Eqn. (\ref{eq:app-cond-indep2}))} \\
        &= \mE\left[ Y_i(\mathbf{a}_{\cN_i}, \mathbf{A}_{-\cN_i}) \right] ~.
    \end{align*}
    Finally,
    \begin{align*}
        \mE\left[  \frac{1\left\{ \mathbf{A}_{\cN_i} = \mathbf{a}_{\cN_i} \right\}  \mE[Y_i \mid \mathbf{A}_{\cN_i}=\mathbf{a}_{\cN_i}, \mathbf{A}_{-\cN_i}, \mathbf{L}]}{\mP\left( \mathbf{A}_{\cN_i} = \mathbf{a}_{\cN_i} \mid \mathbf{L}, \mathbf{A}_{-\cN_i} \right)}\right] &= \mE\left[ \mE[Y_i \mid \mathbf{A}_{\cN_i}=\mathbf{a}_{\cN_i}, \mathbf{A}_{-\cN_i}, \mathbf{L}] \right]\\
        &= \mE\left[ Y_i(\mathbf{a}_{\cN_i}, \mathbf{A}_{-\cN_i}) \right] ~.
    \end{align*}
    Therefore, the result holds. 
\end{proof}
\subsection{Technical Lemmas and Their Proofs}
\begin{lemma}\label{lemma:independent}
Under Assumptions \ref{ass:consistency}, \ref{ass:ignorability} and \ref{ass:positivity}, we have
\begin{enumerate}[label={(\roman*)}]
    \item  $\mathbf{Y}(\mathbf{a}_{\cN_i}, \mathbf{A}_{-\cN_i})\independent \mathbf{A}_{\cN_i}|\mathbf{A}_{-\cN_i}, \mathbf{L}$~.
    \item  The distributions of $Y_i(\mathbf{a}_{\cN_i}, \mathbf{A}_{-\cN_i}) \mid \mathbf{Y}_{-i}, \mathbf{A}_{\cN_i}=\mathbf{a}_{\cN_i}, \mathbf{A}_{-\cN_i}, \mathbf{L}$ and $Y_i(\mathbf{a}_{\cN_i}, \mathbf{A}_{-\cN_i}) \mid \mathbf{Y}_{-i}(\mathbf{a}_{\cN_i}, \mathbf{A}_{-\cN_i}),\mathbf{A}_{-\cN_i}, \mathbf{L}$ are identical.
    \item  The distributions of $Y_i \mid \mathbf{Y}_{-i}, \mathbf{A}=\mathbf{a}, \mathbf{L}$ and $Y_i(\mathbf{a}) \mid \mathbf{Y}_{-i}(\mathbf{a}), \mathbf{L}$ are identical.
\end{enumerate}
\end{lemma}
\begin{proof}

(1) For any Borel set $B$, $\mathbf{a}^*_{\cN_i}$ and $\mathbf{a}^*_{-\cN_i}$ satisfying $\mP(\mathbf{A}_{-\cN_i}=\mathbf{a}^*_{-\cN_i} \mid  \mathbf{L})>0$, we have 
\begin{align*}
    &\mP(\mathbf{Y}(\mathbf{a}_{\cN_i}, \mathbf{A}_{-\cN_i})\in B, \mathbf{A}_{\cN_i}=\mathbf{a}^*_{\cN_i} \mid \mathbf{A}_{-\cN_i}=\mathbf{a}^*_{-\cN_i}, \mathbf{L}) \\
    &= \frac{\mP(\mathbf{Y}(\mathbf{a}_{\cN_i}, \mathbf{a}^*_{-\cN_i})\in B, \mathbf{A}_{\cN_i}= \mathbf{a}^*_{\cN_i}, \mathbf{A}_{-\cN_i}=\mathbf{a}^*_{-\cN_i} \mid  \mathbf{L})}{\mP(\mathbf{A}_{-\cN_i} =\mathbf{a}^*_{-\cN_i}\mid  \mathbf{L})} \\
    &=\frac{\mP(\mathbf{Y}(\mathbf{a}_{\cN_i}, \mathbf{a}^*_{-\cN_i})\in B \mid  \mathbf{L})\mP( \mathbf{A}_{\cN_i}=\mathbf{a}^*_{\cN_i}, \mathbf{A}_{-\cN_i}=\mathbf{a}^*_{-\cN_i} \mid  \mathbf{L})}{\mP(\mathbf{A}_{-\cN_i}=\mathbf{a}^*_{-\cN_i} \mid  \mathbf{L})} \\
    &= \frac{\mP(\mathbf{Y}(\mathbf{a}_{\cN_i}, \mathbf{a}^*_{-\cN_i})\in B \mid  \mathbf{A}_{-\cN_i}=\mathbf{a}^*_{-\cN_i}, \mathbf{L})\mP( \mathbf{A}_{\cN_i}=\mathbf{a}^*_{\cN_i}, \mathbf{A}_{-\cN_i}=\mathbf{a}^*_{-\cN_i} \mid  \mathbf{L})}{\mP(\mathbf{A}_{-\cN_i} =\mathbf{a}^*_{-\cN_i} \mid  \mathbf{L})} \\
    &= \mP(\mathbf{Y}(\mathbf{a}_{\cN_i}, \mathbf{A}_{-\cN_i})\in B\mid  \mathbf{A}_{-\cN_i}=\mathbf{a}^*_{-\cN_i}, \mathbf{L}) \mP( \mathbf{A}_{\cN_i}=\mathbf{a}^*_{\cN_i} \mid  \mathbf{A}_{-\cN_i}=\mathbf{a}^*_{-\cN_i}, \mathbf{L}) ~.
\end{align*}
This completes the proof of $\mathbf{Y}(\mathbf{a}_{\cN_i}, \mathbf{A}_{-\cN_i}) \independent \mathbf{A}_{\cN_i} \mid \mathbf{A}_{-\cN_i}, \mathbf{L}$.

(2) For any Borel sets $B_1, B_2$ satisfying $\mP(\mathbf{Y}_{-i}\in B_2,  \mathbf{A}_{\cN_i}=\mathbf{a}_{\cN_i}\mid\mathbf{A}_{-\cN_i}, \mathbf{L}) > 0$, we have
\begin{align}
\begin{aligned}
&\mP(Y_i(\mathbf{a}_{\cN_i}, \mathbf{A}_{-\cN_i})\in B_1 \mid \mathbf{Y}_{-i}\in B_2,  \mathbf{A}_{\cN_i}=\mathbf{a}_{\cN_i}, \mathbf{A}_{-\cN_i}, \mathbf{L}) \\
& = \frac{\mP(Y_i(\mathbf{a}_{\cN_i}, \mathbf{A}_{-\cN_i})\in B_1, \mathbf{Y}_{-i}\in B_2 \mid \mathbf{A}_{\cN_i}=\mathbf{a}_{\cN_i}, \mathbf{A}_{-\cN_i}, \mathbf{L})}{\mP(\mathbf{Y}_{-i}\in B_2 \mid \mathbf{A}_{\cN_i}=\mathbf{a}_{\cN_i}, \mathbf{A}_{-\cN_i}, \mathbf{L})} \\
& = \frac{\mP(Y_i(\mathbf{a}_{\cN_i}, \mathbf{A}_{-\cN_i})\in B_1, \mathbf{Y}_{-i}(\mathbf{a}_{\cN_i}, \mathbf{A}_{-\cN_i})\in B_2 \mid \mathbf{A}_{\cN_i}=\mathbf{a}_{\cN_i}, \mathbf{A}_{-\cN_i}, \mathbf{L})}{\mP(\mathbf{Y}_{-i}(\mathbf{a}_{\cN_i}, \mathbf{A}_{-\cN_i})\in B_2 \mid \mathbf{A}_{\cN_i}=\mathbf{a}_{\cN_i}, \mathbf{A}_{-\cN_i}, \mathbf{L})} \\
& = \frac{\mP(Y_i(\mathbf{a}_{\cN_i}, \mathbf{A}_{-\cN_i})\in B_1, \mathbf{Y}_{-i}(\mathbf{a}_{\cN_i}, \mathbf{A}_{-\cN_i})\in B_2 \mid  \mathbf{A}_{-\cN_i}, \mathbf{L})}{\mP(\mathbf{Y}_{-i}(\mathbf{a}_{\cN_i}, \mathbf{A}_{-\cN_i})\in B_2 \mid \mathbf{A}_{-\cN_i}, \mathbf{L})} \\
& = \mP(Y_i(\mathbf{a}_{\cN_i}, \mathbf{A}_{-\cN_i})\in B_1 \mid \mathbf{Y}_{-i}(\mathbf{a}_{\cN_i}, \mathbf{A}_{-\cN_i})\in B_2, \mathbf{A}_{-\cN_i}, \mathbf{L}),
\end{aligned}
\end{align}
where the first equality follows from the definition of conditional probability, and the second equation follows from Assumption~\ref{ass:consistency}. The third equality follows from the result in (1).

(3) Similar to the proof of (2), for any Borel sets $B_1, B_2$ satisfying $\mP(\mathbf{Y}_{-i}\in B_2\mid\mathbf{A}=\mathbf{a}, \mathbf{L}) > 0$, we have
\begin{align}\label{eqn:thm1_2_2}
\begin{aligned}
&\mP(Y_i(\mathbf{a})\in B_1 \mid \mathbf{Y}_{-i}\in B_2, \mathbf{A}=\mathbf{a}, \mathbf{L}) \\
& = \frac{\mP(Y_i(\mathbf{a})\in B_1, \mathbf{Y}_{-i}\in B_2 \mid \mathbf{A}=\mathbf{a}, \mathbf{L})}{\mP(\mathbf{Y}_{-i}\in B_2 \mid \mathbf{A}=\mathbf{a}, \mathbf{L})} \\
& = \frac{\mP(Y_i(\mathbf{a})\in B_1, \mathbf{Y}_{-i}(\mathbf{a})\in B_2 \mid \mathbf{A}=\mathbf{a}, \mathbf{L})}{\mP(\mathbf{Y}_{-i}(\mathbf{a})\in B_2 \mid \mathbf{A}=\mathbf{a}, \mathbf{L})} \\
& = \frac{\mP(Y_i(\mathbf{a})\in B_1, \mathbf{Y}_{-i}(\mathbf{a})\in B_2 \mid \mathbf{L})}{\mP(\mathbf{Y}_{-i}(\mathbf{a})\in B_2 \mid \mathbf{L})} \\
& = \mP(Y_i(\mathbf{a})\in B_1 \mid \mathbf{Y}_{-i}(\mathbf{a})\in B_2, \mathbf{L})~.
\end{aligned}
\end{align}
This completes the proof of (3).
\end{proof}

\begin{lemma}\label{lemma:M-consistency}
    Under Assumption \ref{ass:network_weak_dependence} and \ref{ass:regularization}, we have
  \begin{align*}
      \frac{1}{n}\sum_{i=1}^n H_{1,i} \overset{a.s.}{\longrightarrow} M_1,\qquad \frac{1}{n}\sum_{i=1}^n H_{2,i} \overset{a.s.}{\longrightarrow} M_2~
  \end{align*}
  and 
   \begin{align*}
     \frac{1}{n}\sum_{i=1}^n\frac{\partial g^y(Y_i,X_i^y;\theta)}{\partial\theta} \overset{a.s.}{\longrightarrow} \lim_{n\to\infty} M_3, \qquad
     \frac{1}{n}\sum_{i=1}^n\frac{\partial g^a(Y_i,X_i^a;\eta)}{\partial\eta} \overset{a.s.}{\longrightarrow} \lim_{n\to\infty} M_4~.
  \end{align*}
\end{lemma}
\begin{proof}
It is sufficient to verify the first equation. The proofs for the remaining three are analogous. Note that $H_{1,i} = H_{1,i}(Z_i)$ can be viewed as a function of $Z_i := (Y_j,A_j,L_j,j\in\cN_n(i,K+1))$, and from the definition of $\psi$ dependence, we directly have $Z_i$ is $\psi$ dependence with coefficients $\widetilde\zeta_n = \zeta_{n,s-2k-2}$ for $s>2k+2$ and $\widetilde \zeta $ holds the same asymptotic assumptions as in Assumption \ref{ass:network_weak_dependence}. Then for any Lipschitz function $h$ and any bounded Lipschitz function $g$,  as the composition $g\circ h$ is also a bounded Lipschitz function, we have $h(Z_i)$ is also $\psi$ dependence with the same coefficients immediately from the definition of $\psi$ dependence. Consequently, theorems in \cite{kojevnikov2021limit} ensure the CLT result of $h(Z_i)$ for any Lipschitz function $h$. 
    
    However, $H_{1,i}$ is not Lipschitz in general. Therefore, we introduce the truncated function 
    $$H_{1,i}^{(n)} = H_{1,i}^{(n)}(Z_i) = \begin{cases}
        H_{1,i}(Z_i) ,&\quad  |H_{1,i}(Z_i)|\le n\\
        n,&\quad H_{1,i}(Z_i)> n\\
         -n,&\quad H_{1,i}(Z_i)<-n
    \end{cases}$$
    and it is sufficient to show that
    \begin{align}
         &\frac{1}{n}\sum_{i=1}^n(H_{1,i}- H_{1,i}^{(n)}) \overset{a.s.}{\longrightarrow} 0~, \label{thm4.3:eqn1}
    \end{align}
    and
    \begin{align}
         &\frac{1}{n}\sum_{i=1}^n H_{1,i}^{(n)} \overset{a.s.}{\longrightarrow} M_1 \label{thm4.3:eqn2}~.
    \end{align}
    Note that, for any $\epsilon>0$, we have
    \begin{align*}
        \mathrm{P}\left(\left|\frac{1}{n}\sum_{i=1}^n(H_{1,i}- H_{1,i}^{(n)})\right|\ge \epsilon\right) \le \sum_{i=1}^n \mathrm{P}(H_{1,i}\neq H_{1,i}^{(n)}) =\sum_{i=1}^n \mathrm{P}(|H_{1,i}|> n)\le \sum_{i=1}^n\frac{\mE(H_{1,i}^4)}{n^4}\lesssim \frac{1}{n^{3}}~,
    \end{align*}
    which implies that
    \begin{align}
        \sum_{n=1}^\infty\mathrm{P}\left(\left|\frac{1}{n}\sum_{i=1}^n(H_{1,i}- H_{1,i}^{(n)})\right|\ge \epsilon\right) \lesssim \sum_{n=1}^\infty\frac{1}{n^{3}}< \infty~ \label{thm4.3:eqn3}.
    \end{align}
    Consequently, \eqref{thm4.3:eqn1} is implied by \eqref{thm4.3:eqn3} and Borel-Cantelli Lemma. To show \eqref{thm4.3:eqn2}, recall that $H_{1,i}^{(n)}$ is a Lipshcitz function with Lipshcitz coefficient $n^q$, then for any bounded Lipschitz function $f_1$, $f_2$, we have
    \begin{align*}
    |\Cov(f_1(H_{1,Q_1}), f_2(H_{1,Q_2}) \mid \cG_n)| \leq \psi_{q_1,q_2}(f_1\circ h_{1,Q_1},f_2\circ h_{1,Q_2}) \zeta_{n,s} n^{2q}  \ \ \  \mbox{a.s.}
    \end{align*}
    Therefore, $H_{1,i}$ is $\psi-$ dependence with coefficient $\widetilde\zeta_{n,s} = \zeta_{n,s} n^{2q}$ and $\psi_{q_1,q_2}$, following \cite{kojevnikov2021limit}, we have 
    \begin{align*}
        \frac{1}{n}\sum_{i=1}^n (H_{1,i}^{(n)}-\mathrm E(H_{1,i}^{(n)})) \overset{a.s.}{\longrightarrow} 0.
    \end{align*}
    This completes the proof.
\end{proof}

\newpage
\bibliography{reference.bib}

\end{document}